\documentclass[aps,prx,superscriptaddress,nofootinbib,notitlepage,showpacs,floatfix,twocolumn]{revtex4-1}

\usepackage{graphicx,graphics,epsfig,subfigure,times,bm,bbm,amssymb,amsmath,amsfonts,mathrsfs}
\usepackage[scr=boondoxo,scrscaled=1.05]{mathalfa}
\usepackage[matrix,frame,arrow]{xypic}
\usepackage[pdfstartview=FitH]{hyperref}

\usepackage[pdftex]{color}



\newtheorem{theorem}{Theorem}

\newtheorem{definition}{Definition}
\newtheorem{example}{Example}

\newcommand{\beq}{\begin{equation}}
\newcommand{\eneq}{\end{equation}}
\newcommand{\beqnn}{\begin{equation*}}
\newcommand{\eneqnn}{\end{equation*}}
\newcommand{\beqy}{\begin{eqnarray}}
\newcommand{\eneqy}{\end{eqnarray}}
\newcommand{\beqynn}{\begin{eqnarray*}}
\newcommand{\eneqynn}{\end{eqnarray*}}
\newcommand{\half}{\mbox{$\textstyle \frac{1}{2}$}}

\newcommand{\ket}[1]{ | #1 \rangle  }
\newcommand{\bra}[1]{ \langle  #1 | }

\newcommand{\tr}[1]{\textrm{Tr}[ {#1} ]}

\newcommand{\T}{\textrm{T}}

\newcommand{\bes} {\begin{subequations}}
\newcommand{\ees} {\end{subequations}}
	\newcommand{\bea} {\begin{eqnarray}}
	\newcommand{\eea} {\end{eqnarray}}

\newenvironment{proof}[1][Proof]{\noindent\textbf{#1.} }{\ \rule{0.5em}{0.5em}}

\newcommand{\ignore}[1]{}


\usepackage{ulem}

\newcommand{\blk}{\color{black}}

\definecolor{maroon}{rgb}{0.7,0,0}

\definecolor{ngreen}{rgb}{0.3,0.7,0.3}

\definecolor{golden}{rgb}{0.8,0.6,0.1}

\newcommand{\R}{{\cal R}}

\begin{document}

\title{On the dynamics of initially correlated open quantum systems: theory and applications}

\author{Gerardo~A. Paz-Silva}

\affiliation{Centre for Quantum Computation and Communication Technology (Australian Research Council), \\ 
Centre for Quantum Dynamics, \blk Griffith University, Brisbane, Queensland 4111, Australia} 
 \author{Michael J. W. Hall}
 \affiliation{Centre for Quantum Computation and Communication Technology (Australian Research Council), \\ 
Centre for Quantum Dynamics, \blk Griffith University, Brisbane, Queensland 4111, Australia} 
\affiliation{Department of Theoretical Physics, Research School of Physics and Engineering,
	Australian National University, Canberra ACT 0200, Australia}

\author{Howard M. Wiseman}
\affiliation{Centre for Quantum Computation and Communication Technology (Australian Research Council), \\ 
Centre for Quantum Dynamics, \blk Griffith University, Brisbane, Queensland 4111, Australia} 

\begin{abstract}
We show that the dynamics of any open quantum system that is initially correlated with its environment can be described by a set of $d^2$ (or  less)  completely positive maps, where $d$ is the dimension of the system. Only one such map is required for the special case of no initial correlations. The same maps describe the dynamics of any system-environment state obtained from the initial state by a local operation on the system. \ignore{(and are related to the set of states to which the environment may be steered via such local operations).} The reduction of the system dynamics to a set of completely positive maps allows known numerical and analytic tools for uncorrelated initial states to be applied to the general case of initially correlated states, which we exemplify by solving the qubit dephasing model for such states, \blk and provides a natural approach to quantum Markovianity for this case. We show that this set of completely positive maps can be experimentally characterised using only local operations on the system, via a generalisation of noise spectroscopy protocols. As further \blk applications, we first consider the problem of retrodicting the dynamics of an open quantum system which is in an arbitrary state when it  becomes  accessible to the experimenter, and explore the conditions under which retrodiction  is possible. We also introduce a related one-sided or limited-access tomography protocol for determining an arbitrary bipartite state, evolving under a sufficiently rich Hamiltonian, via local operations and measurements on just one component. We simulate this  protocol for  a physical model of particular relevance to nitrogen-vacancy centres, and in particular show how to  reconstruct the density matrix of a set of three qubits, interacting via dipolar coupling and in the presence of local magnetic fields, by  measuring and controlling only one of them.
\end{abstract}

\maketitle


\section{Introduction} \label{introduction}

Understanding, accurately predicting, and controlling the (average or expected) behavior of quantum systems in realistic scenarios, i.e., in the presence of noise, is fundamental to the development of quantum-enhanced cutting-edge technologies such as quantum computing~\cite{nielsen} and quantum metrology~\cite{QSensing}. This is the domain of the theory of open quantum systems and quantum control.  

Mathematically, the theory of open quantum systems deals with the general scenario in which a quantum system, $S$,  interacts with (typically inaccessible) external degrees of freedom, dubbed the environment or bath,  $B$. The state of the system plus bath is described by a density matrix $\rho_{SB}(t)$ on a joint Hilbert space \blk $\mathcal{H}_S \otimes \mathcal{H}_B$, whose evolution  ruled by a Hamiltonian $H(t)$ via the unitary operation $U(t) = \mathcal{T} e^{-i \int_{0}^t H(s) ds}$, with $\rho_{SB}(t) = U(t) \rho_{SB}(0) U^\dagger(t)$. Under these conditions, the objective is to predict and eventually control the reduced dynamics of the system in the presence of the inaccessible bath. That is, one would like to determine    
\beq \label{reducedintro}
\rho_S(t) = {\rm Tr}_B [U(t) \rho_{SB} (0) U(t)^\dagger], 
\eneq
for any $t$. This is a highly non-trivial problem which can only be analytically solved in very special scenarios and/or under strong assumptions such as Gaussianity. Powerful analytical and numerical methods to solve this problem, generally approximately, have been devised over the years, including various flavors of master equation and path integral methods~\cite{OpenQuantumSystemsReview,OpenQuantumSystemsBook1,OpenQuantumSystemsBook2}. 

In deriving and applying such methods, two strong assumptions are typically made. The first is the so-called {\it factorisable initial state} condition, i.e., $\rho_{SB}(0) = \rho_S \otimes \rho_B$. The second assumption is one of {\it sufficient dynamical information}: noting that the bath is, in general, not fully accessible, sufficient knowledge about $\rho_B$ and $H(t)$ must be assumed to determine their effect on the system dynamics. In this paper we will show that,  surprisingly, \blk these two assumptions can both be dispensed with. This allows the evolution of {\it arbitrary} open systems to be characterised using completely positive maps and quantum sensing protocols, and opens the way  for applications such as \blk the tomography of bipartite systems via measurements on one side only. 

\subsection{Summary of  results}

The factorisable initial state assumption $\rho_{SB}(0) = \rho_S \otimes \rho_B$ is rather strong, but is useful as it implies that 
\beq
\label{CPTPfact}
\rho_{S}(t) = \phi_t(\rho_S), 
\eneq
where  $\phi_t(\cdot)$ is a completely positive trace preserving (CPTP) map. This assumption is ubiquitous in the theory of open quantum systems, and underpins the widespread use of CPTP maps in quantum information theory~\cite{nielsen} and  in definitions of quantum Markovianity~\cite{RHPreview,Breuerreview,LHWreview}. In contrast, \blk for initially correlated states it is not possible to describe the system evolution in this way, other than for an extremely limited class of initial states~\cite{pechukas,alicki,shaji,CP1,CP2,CP3,CP4,CP5,CP6,CP7,CP8,CP9,CP10,CP11,CP12}. What is more,  calculation methods  used to (even approximately) solve Eq.~\eqref{reducedintro} successfully, in the factorisable case, do not typically apply to correlated initial states \blk (although there are interesting exceptions in certain special cases~\cite{breuer07}). \blk	

Our first main result is to show that CPTP maps acting on states remain a useful tool  even for initially correlated \blk states, with the evolution of a $d$-dimensional  open quantum \blk system requiring only $d^2$ such maps at most.  Moreover, for a given initial state $\rho_{SB}$, the same set of maps describes the system evolution for {\it any} initial state obtained from $\rho_{SB}$ via a local operation on the system. This is a very large space of states  (typically on the order of $d^4$ dimensions), \blk which includes all factorisable states $\tau_S\otimes\rho_B$ (for arbitrary $\tau_S$ and \blk \blk $\rho_B:={\rm Tr}_S[\rho_{SB}]$), and in general many others. This  result \blk is based on a natural ``bath-positive'' decomposition of initially-correlated states,  and allows \blk existing calculation methods for solving Eq.~\eqref{reducedintro} in the factorisable case  to be \blk easily  extended to the arbitrary initial condition scenario. \blk We exemplify this by fully solving the problem of qubit dephasing~\cite{massimo, reina, OpenQuantumSystemsBook2} in the case of arbitrary initial correlations.   We stress that the key feature of our decomposition is that it applies to arbitrary initial states and is thus constructive and universal, which is in stark contrast with other decompositions that have been introduced in the literature (see for example Refs~\cite{deco1},~\cite{deco2}) to overcome the correlated initial state problem in particular scenarios. \blk We also demonstrate a direct link between the set of maps and quantum steering~\cite{steer}, and with the superchannel recently introduced by Modi that maps local system operations to the system state at later times~\cite{modi}. We further explore quantum Markovianity in the context  of initial correlations,  and define a notion of ``computational Markovianity'' in terms of the Markovian character of the set of the CPTP maps describing the dynamics.   We explore these results in Sections~\ref{theory} and \ref{Dynamics}.

Our second main result is to show that one can remove the {\it sufficient dynamical information} assumption. In the factorisable case,  it is known that \blk this assumption can be bypassed in the sense that, while $\rho_B$ and $H(t)$ cannot be directly measured in general, the necessary information for describing the system evolution \blk can be {\it indirectly} measured. Indeed, this is the motivation behind a recent push to develop so-called {\it noise spectroscopy protocols}~\cite{Spectro1,Spectro2,Spectro3,Spectro4,Spectro5,Spectro6,Spectro7,Spectro8,Spectro9,Spectro10,Spectro11}. These protocols are based on the observation that the dynamics of the system does not require explicit knowledge of $H(t)$ and $\rho_B$, but rather of the {\it correlations} present in the bath. In particular, writing the joint system bath Hamiltonian in the interaction picture with respect to the bath self Hamiltonian $H_B$ as $H(t) = \sum_{b} W_b \otimes B_{b}(t)$, with $\{W_b\}$  an operator basis for the system, the dynamics of the system  for a factorisable initial state $\rho_{SB}(0) = \rho_S \otimes \rho_B$ \blk depend only on the \blk bath {\it correlators}~\cite{OpenQuantumSystemsReview}
\beq
\label{Corre}
\langle B_{b_1} (t_1) \cdots B_{b_k} (t_k) \rangle  \equiv \langle \tr{ B_{b_1} (t_1) \cdots B_{b_k} (t_k) \rho_B} \rangle_c,
\eneq
where $\langle \cdot \rangle_c$ denotes the average over realisations of any (classical) stochastic  processes for $B_b(t)$. Importantly,  if detailed information about such correlators is available, it is possible to design control sequences (via optimal control techniques for example) capable of executing a desired system operation with high fidelity~\cite{FiltersApp}. 

Noise spectroscopy protocols exploit the ability to measure the response of the quantum system to different control sequences in the presence of the bath, in order to obtain the Fourier transforms, $\langle \tilde B_{b_1} (\omega_1) \cdots \tilde B_{b_k} (\omega_k) \rangle$, of the aforementioned bath correlators. To date, detailed protocols have only been described for certain noise models, i.e., for $H(t)$ and $\rho_B$ satisfying specific conditions. Nevertheless the general methodology behind them allows, \blk in principle, a protocol for general noise models to be designed. More broadly, protocols that exploit the ability to measure the response of a quantum system to its environment, be it classical or quantum, are the essence of {\it quantum sensing}~\cite{QSensing} and they range from the simpler phase or parameter estimation protocols (for a constant and classical $B(t)$) to the more ambitious noise spectroscopy protocols outlined above. Here, we will show  how quantum sensing protocols can be seamlessly extended \blk to the scenario where system and bath are initially correlated, thus dispensing with both the factorisable initial state and dynamical information assumptions. \blk These results are contained in Section~\ref{QSense}.

In Secs.~\ref{retrosec} and~\ref{LAT} we give two applications of  such extended quantum sensing protocols.  First, in Sec.~\ref{retrosec},  we show practical retrodiction of the system state at earlier times is possible, under mild assumptions on the system-bath Hamiltonian $H(t)$. \blk
Finally, as detailed in Section~\ref{LAT}, we \blk develop a one-sided tomography protocol. Concretely, we show how, when the bath is finite dimensional and $H(t)$ is known and sufficiently non-trivial, it is possible to do tomography on the joint system-bath state despite having only control and measurement capabilities on the system. We describe this in detail for  three interacting \blk qubits  with dipole-dipole couplings in a magnetic field --- of particular relevance to nitrogen-vacancy (NV) centers~\cite{NVcenter} and  nuclear magnetic resonance (NMR)~\cite{NMR}. \blk

\subsection{Practical motivation}

Having  briefly outlined the scope of our results, we now comment on the practical need to study the problem considered here. This is complementary to \blk the standard motivations given in the literature~\cite{CP1, CP2, modi}, which  revolve around the fundamental question in the theory of open quantum systems: what is $\rho_S(t)$ given an arbitrary initial condition $\rho_{SB}$?  It is also additional to  the recent practical \blk uses \blk of correlated initial states for \blk consistent calculations of condensed-phase reaction rates~\cite{tannor}  and \blk for engineering arbitrary phase decoherence dynamics~\cite{piiloexp}. \blk

As hinted in the summary of results, characterizing the dynamics of a quantum system that is initially correlated with its environment \blk is crucial on our road to the development of quantum technologies. An idealised text-book quantum computer can be described by an initial preparation stage, followed by a unitary gate $U,$ and finalised by a measurement stage that extracts the result of the computation~\cite{nielsen}. With the realisation that operations are never ideal and that the coupling of quantum systems to a bath induces noise, the unitary operation stage is then usually replaced by a quantum channel \blk $\Lambda_U$. Standard methods to characterise the error induced by the presence of the bath crucially rely on, among other conditions, the assumption \blk that $\Lambda_U$ is a CPTP map (or a sequence of them). Randomised benchmarking~\cite{RB}, for example, heavily uses the assumption that a sequence of noisy unitaries can be modeled by a sequence of CPTP maps. This clearly constraints the possible set of noise models that can be analyzed by the tool as, for example, if noise is introduced by a quantum bath such assumption cannot hold as after even a single unitary the system and bath become entangled. 
 Additionally, error correction and error suppression techniques~\cite{DD1,DD2,DD3, DDBook}, developed to ensure that $\Lambda_U$ is as close as possible  to the ideal unitary operation (in an appropriate sense, e.g., the diamond norm), usually implicitly assume that the system and bath are initially in a factorisable state. In particular, noise spectroscopy protocols and the associated optimal control techniques, which have seen a big push in recent years~\cite{Spectro1,Spectro2,Spectro3,Spectro4,Spectro5,Spectro6,Spectro7,Spectro8,Spectro9,Spectro10,Spectro11}, have been developed in the context of this assumption. \blk However,  this CPTP \blk assumption does not correspond to \blk the most general scenario~\cite{OpenQuantumSystemsReview} and, at the very least, has to be verified. In summary, pushing quantum system \& noise characterisation protocols and methods to predict/retrodict the evolution of a quantum system beyond the single CPTP scenario and into in the more realistic general non-factorisable state setting is the main practical \blk  motivation of this work.  

Now, an argument might be made that any ``good'' preparation procedure should initialise the system-bath in a factorisable state, {\blk potentially rendering the above critique trivial. However, such an argument does not overcome the problem when one is trying to understand a sequence of unitary operations in the presence of the bath. After even a single unitary, system and bath generally become entangled, and thus  at any given time $t>0$ the factorisable initial condition cannot hold in general.}
	
\blk Further,  even at the $t=0$ preparation stage,  there is a subtle but important problem with  such an argument.  To see this consider perhaps the simplest preparation procedure that comes to mind, that ``naturally'' and deterministically outputs a desired factorisable state of the form $\ket{\psi_0}\bra{\psi_0} \otimes \rho_B$ for an $n$-qubit system, \blk with, e.g., $\ket{\psi_0} =\ket{+}^{\otimes n}$. Starting from a \blk typically-correlated \blk initial state $\rho_{SB}$,  one applies a projective measurement with $2^n$ outcomes on the system and then, based on the outcome, applies a unitary operation that rotates each qubit \blk to the desired state. For the $i$-th outcome we will have 
$$ \rho_{SB}^{(i)} = \ket{\psi_0}\bra{\psi_0} \otimes \frac{{\rm tr}_S [(\Pi_i \otimes I_B ) \rho_{SB}]}{{{\rm tr} [(\Pi_i \otimes I_B ) \rho_{SB}]}} = \ket{\psi_0}\bra{\psi_0} \otimes \rho_B^{(i)},$$ 
where $\Pi_i$ is the projector on the state corresponding the $i$th outcome, and we have already included the effect of the rotation in the preparation procedure. The key feature, arising from the fact that the state before the preparation procedure is non-factorisable, is that the evolution of the system after the preparation procedure will depend on the measurement outcome, via $\rho_B^{(i)}$. \blk This is in stark contrast with the factorisable initial state scenario. Further, if we only concern ourselves with the output of the preparation procedure $\ket{\psi_0}\bra{\psi_0}$ (effectively throwing away information about the measurement outcome), and attempt to characterise the subsequent system evolution, we find that it effectively evolves under a CPTP map generated by the bath state $\rho_B = \sum_i {\rm tr} [(\Pi_i \otimes I_B ) \rho_{SB}] \rho_B^{(i)}$. This  is not, however, the correct CPTP map ruling the evolution of the state after each preparation, thus pointing to the need to characterise all possible bath states after the projective measurements or, more generally, to understand in more detail the structure of the initially correlated state. While using $\rho_B$ (and not the $\rho_B^{(i)}$)  to predict the evolution of $\ket{\psi_0}\bra{\psi_0}$ may not be a problem for understanding the gross behavior of a quantum system, it will be crippling when attempting to achieve high-fidelity gates via noise spectroscopy methods and related optimal control methods discussed earlier. 

{\blk  Thus, there are practical motivations to go beyond  the initially factorizable state assumption.}
The results we described above, and that we now proceed to explain, are key to \blk doing so, \blk and open the way to generalisations of well-established protocols, such as randomised benchmarking and noise spectroscopy, to the most general setting.

\section{Bath-positive decompositions}
\label{theory}

\subsection{General definition}


The key to the developments in this paper will be the ability to decompose an arbitrary initial density matrix as 
\begin{equation} \label{bplusgen}
\rho_{SB}(0) = \sum_\alpha w_\alpha Q_\alpha \otimes \rho_\alpha,
\end{equation}
where, crucially, each $\rho_\alpha$ is a valid density matrix of the bath and $\{Q_\alpha\}$ forms a (possibly overcomplete) basis for operators on $\mathcal{H}_S$.
The $Q_\alpha$ are not restricted to be positive or trace-orthogonal.  Note that in this form all information about the initial system state, \blk
\beq \label{rhos}
\rho_S:={\rm Tr}_B[\rho_{SB}(0)]=\sum_\alpha w_\alpha Q_\alpha ,
\eneq
is condensed into the coefficients $\{w_\alpha\}$, while information about  correlations also resides in the $\{ \rho_{a}\}$. It should be noted that this is not the only way of generating a decomposition for $\rho_{SB}$ such that in each term the bath component is a density operator but, as we will see, the fact that, in addition, the $\{Q_\alpha\}$ form a fixed operator basis  is crucial for our results. We dub this a {\it bath-positive} or B+ decomposition of the density matrix. To illustrate this, let us consider a finite dimensional example. 

\begin{example}  \label{Ex1} (Qubit plus bath) When the system  is \blk a qubit, one can  use the completeness of the Pauli sigma basis $\{\sigma_0, \sigma_x,\sigma_y,\sigma_z\}$ (with $\sigma_0=\mathbf{1}$) to write an arbitrary joint state \blk $\rho_{SB}$ as 
\begin{align*}
\rho_{SB} &=  \half \sum_{ \alpha \blk =0,x,y,z} \sigma_{ \alpha \blk} \otimes {\rm tr}_S[(\sigma_{ \alpha \blk}\otimes\mathbf{1}_B)\rho_{SB}] \blk \\
& =: \half \sum_{ \alpha \blk =0,x,y,z} \sigma_{ \alpha \blk} \otimes \eta_{ \alpha \blk} \\
&= \frac{\sigma_0 - \sum_{ \alpha \blk =x,y,z} \sigma_{ \alpha \blk }}{2} \otimes \eta_0 + \sum_{ \alpha \blk =x,y,z} \frac{\sigma_{ \alpha \blk }}{2} \otimes (\eta_0 + \eta_{ \alpha \blk })\\
&\equiv \sum_{\alpha=0,x,y,z} w_\alpha Q_\alpha \otimes \rho_\alpha, 
\end{align*}
with weights $w_\alpha$ and density operators $\rho_{ \alpha \blk}$ defined via $p_0\rho_0=\eta_0$, and $p_{\alpha}\rho_\alpha =\eta_0+\eta_{ \alpha \blk}$ for $\alpha=x,y,z$. Note via the second line that \blk $\eta_0 = {\rm tr}_S [\rho_{SB}]=\rho_B$, and  $\eta_0 + \eta_{ \alpha \blk } = {\rm tr}_S [ ((\sigma_0 + \sigma_{ \alpha \blk })\otimes \mathbf{1}_B) \rho_{SB}]$ for $\alpha=x,y,z$, and so  $\rho_{\alpha}$ is a positive operator as desired. Similar constructions can be crafted for higher dimensions, using generalised Pauli bases. 
\end{example}

A general construction of B+ decompositions is as follows. First, let  $\{ P_\alpha\}$ be any basis set of positive system operators. This basis set may be overcomplete, and is also called a  {\it frame}~\cite{frames,optTomogra}. For such $\{ P_\alpha\}$ one can always construct a {\it dual basis} or {\it dual frame}, $\{Q_\alpha\}$, such that  that any system operator $A$ acting on $\mathcal{H}_S$ can be decomposed as 
\beq
\label{decom}
A = \sum_{\alpha} \tr{ A Q_\alpha} P_\alpha =  \sum_{\alpha} \tr{ A P_\alpha} Q_\alpha.
\eneq
In particular, if  $\{G_j\}$ is an orthonormal basis set of Hermitian operators  on $\mathcal{H}_S$, with $\tr{G_j G_k}=\delta_{jk}$, then a suitable dual frame is specified by \blk (see Appendix~\ref{construct}) \blk
\beq \label{qmp}
Q_\alpha =\sum_\beta M_{\alpha\beta} P_\beta ,\qquad \textrm{M}=\textrm{T}\,(\textrm{T}^\top \textrm{T})^{-2}\, \textrm{T}^\top,
\eneq
where $\textrm{T}$ is the (typically non-square) matrix with coefficients $\textrm{T}_{\alpha j} := \tr{P_\alpha G_j}$.  
It is important to highlight that while we focus on finite dimensional systems in this paper,  the above construction also applies to infinite dimensional Hilbert spaces under a mild condition on $\{P_\alpha\}$ (see Appendix~\ref{construct}). 

From the above, it immediately  follows that any joint state $\rho_{SB}$ on $\mathcal{H}_S \otimes \mathcal{H}_B$ has a corresponding B+ decomposition 
\beq
\label{maindeco}
\rho_{SB} = \sum_\alpha Q_\alpha \otimes {\rm Tr}_S[(P_\alpha\otimes 1_B)\rho_{SB}] \equiv \sum_\alpha w_\alpha Q_\alpha \otimes \rho_\alpha,
\eneq
where the weights $w_\alpha$ and the bath density operators $\rho_\alpha$ are implicitly defined via
\begin{align} \label{rhoalpha}
w_\alpha  \rho_\alpha &= {\rm Tr}_S[(P_\alpha\otimes 1_B)\rho_{SB}]
\end{align} 
(with $\rho_\alpha$ arbitrary when the right hand side vanishes). Note that taking the trace of Eq.~(\ref{rhoalpha}) over the bath yields 
\beq
\label{weights}
w_\alpha={\rm Tr}_S[P_\alpha \rho_S].
\eneq 
For the special scenario of a factorisable state one has that $\rho_\alpha= \frac{\tr{(P_{\alpha} \otimes \bm{1})  \rho_{S} \otimes \rho_B}}{\tr{P_{\alpha}\rho_{S}}}=  \rho_B$ for all \blk $\alpha$, as expected.

\subsection{Canonical B+ decompositions}
\label{canonbplus}

Of particular interest are frames $\{ P_\alpha \}$ for which the basis elements $P_\alpha$ are linearly independent (i.e., with precisely $d^2$ basis elements for the case of a $d$-dimensional system Hilbert space), as in Example~\ref{Ex1} above.  Since the expansion of any operator in such a basis is unique, the dual frame is also unique, and taking $A= Q_\beta$ in Eq.~\eqref{decom} implies the biorthogonality property
\beq \label{biorthog}
\tr{ P_\alpha Q_{\beta}} = \delta_{\alpha\beta}.
\eneq
Linear independence further implies that the matrix $\T$ in Eq.~(\ref{qmp}) is invertible, yielding $\textrm{M}=(\T^\top\T)^{-1}$ for the matrix connecting the frame with its dual.

If, additionally, the basis elements $\{ P_\alpha \}$ form a positive operator valued measure (POVM) on $\mathcal{H}_S$, i.e., $\sum P_\alpha ={\bm 1}_S$, then from Eq.~(\ref{weights})  the weights $\{w_\alpha\}$ have a simple interpretation as the probability distribution corresponding to a measurement of $\{P_\alpha\}$ on the system, with 
\beq \label{normal}
\blk w_\alpha \geq 0,\qquad \blk \sum_\alpha w_\alpha = 1. 
\eneq
Further, \blk $\rho_\alpha$ corresponds to the conditional state of the bath for measurement outcome $\alpha$. \blk
Note also that, since the POVM elements form a basis set, the POVM is informationally complete, i.e., the statistics of the measurement are sufficient to reconstruct the initial \blk system density operator via Eq.~(\ref{rhos}).

Thus, the B+ decompositions corresponding to informationally-complete POVMs have a simple operational interpretation, and will be referred to as {\it canonical} B+ decompositions. Such canonical decompositions can be obtained from any complete set of system tomography observables, as shown in Appendix~\ref{condual}. An example based on a symmetric informationally-complete POVM (SIC-POVM) is given below, and generalised in Appendix~\ref{condual}. \blk

\begin{example}  \label{Ex2} (Canonical decomposition of qubit plus bath case). Consider the qubit SIC-POVM $\{P_\alpha=\frac{1}{4}({\bm 1}+  m^{(\alpha)} \cdot \sigma)\}$, defined via the unit Bloch vectors~\cite{sicpovm} 
\begin{eqnarray*}
m^{(0)} &=& \{0,0,1\}\\
m^{(1)} &=& \{\frac{2 \sqrt{2}}{3},0,-\frac{1}{3}\}\\
m^{(2)} &=& \{-\frac{\sqrt{2}}{3},\sqrt{\frac{2}{3}},-\frac{1}{3}\}\\
m^{(3)} &=& \{-\frac{\sqrt{2}}{3},-\sqrt{\frac{2}{3}},-\frac{1}{3}\}.
\end{eqnarray*}
These Bloch vectors form a regular tetrahedron, and the POVM elements satisfy $Tr[ P_{\alpha} P_{\alpha'}] = \frac{\delta_{\alpha,\alpha'}}{6} + \frac{1}{12}$. The dual frame is then given by $\{Q_\alpha=\frac{1}{2}( {\bm 1}+  \blk 3m^{(\alpha)} \cdot \sigma)\}$. \blk
\end{example}

\subsection{Connection with steering}

The bath states $\rho_\alpha$ appearing in a B+ decomposition, as per Eq.~(\ref{bplusgen}), are closely connected to the steering properties of the initial state $\rho_{SB}(0)$. In particular, if one measures some POVM $\{ E_{m}\}$ on the system, it follows from Eqs.~(\ref{maindeco}) and (\ref{rhoalpha}) that the bath is steered to the state $\rho'_m = \sum_\alpha w_\alpha (\tr{ E_m Q_\alpha}/p_m) \rho_\alpha$ for measurement outcome $m$, which occurs  with probability $p_m=\sum_\alpha w_\alpha {\rm Tr} [E_m Q_\alpha]$. Hence, the steered bath states are linear combinations of the $\rho_\alpha$, implying that the span of the set of steered states lies in the span of the $\rho_\alpha$ in the B+ decomposition.  

Note that if the Hilbert space of the system is $d$-dimensional, then choosing a canonical B+ decomposition as per Sec.~\ref{canonbplus} above yields at most $d^2$ different $\rho_\alpha$. Hence, the set of steered bath states must lie in a linear subspace of at most $d^2-1$-dimensions (applying the constraint that they must be normalised). If one considers the set of steered bath states for state $\rho_{SB}(t)$, i.e., as a function of time, then this linear subspace will in general also evolve over time.

\section{Dynamics of open quantum systems with arbitrary initial conditions}
\label{Dynamics}	

\subsection{A set of CPTP maps describes the reduced dynamics}
\label{aset}

Bath-positive decompositions have immediate consequences for representing the dynamics of open quantum systems, for the general scenario of initially correlated system-bath states. Given an arbitrary state at initial time, the reduced dynamics of the system at a time $t$ follows from Eqs.~(\ref{reducedintro}) and~(\ref{bplusgen}) as 
\beqy 
\label{gendynam}
\nonumber \rho_S(t) &=& \sum_\alpha w_\alpha {\rm Tr}_B[ U(t)(Q_\alpha\otimes \rho_\alpha)U(t)^\dagger]\\
& \label{reduced} =& \sum_\alpha w_\alpha \phi_t^{(\alpha)}(Q_\alpha),
\eneqy
where
\beq
\label{indCPTP}
\phi_t^{(\alpha)}(\cdot):= {\rm Tr}_B[ U(t)(\cdot \otimes \rho_\alpha)U(t)^\dagger]
\eneq
is a CPTP map acting on $\mathcal{H}_S$. Hence, in the general scenario, the system state evolution is described via a set of CPTP maps $\{\phi_t^{(\alpha)}\}$, each weighted by $w_\alpha$, and acting on the corresponding element $Q_\alpha$ of the basis. 

The  canonical construction of B+ decompositions in Sec.~\ref{canonbplus} implies that  $d^2$ CPTP maps are sufficient to describe the dynamics of a $d$-dimensional open quantum system.  {\blk We point out that while the maximum number of maps required is a consequence of the linearity of the Liouville equation, it is  by no means trivial to guarantee that the maps  are CPTP, as our decomposition does.} Moreover, \blk fewer maps can be sufficient in special cases. Indeed, in the factorisable case one has $\rho_{SB}(0) =\sum_\alpha w_\alpha Q_\alpha\otimes \rho_B$ from Eq.~(\ref{rhos}), yielding a fixed map $\phi_t^{(\alpha)}\equiv \phi_t$ for each $\alpha$, as expected from Eq.~(\ref{CPTPfact}). 

On the other hand, $d^2$ maps are in fact necessary for some initial system-bath states and interactions. For example, consider a pure entangled initial state $\rho_{SB}(0)=\ket{\Psi}\bra{\Psi}$ having Schmidt rank ${d}$, i.e., $\ket{\Psi} = \sum_{s=1}^{d}  a_s \ket{s} \ket{\chi_s}$ with $a_s>0$ and $\langle s|s'\rangle=\langle\chi_{s}|\chi_{s'}\rangle=\delta_{ss'}$. 
For any B+ decomposition as in Eq.~(\ref{bplusgen}), the basis elements $Q_\alpha$ can always be expanded relative to a basis $\{G_j\}$ of $d^2$ linearly independent operators satisfying $\tr{G_jG_k}=\delta_{jk}$, i.e., $Q_\alpha=\sum_j \tilde \T_{\alpha j}G_j$ for some real matrix $\tilde \T$ (see also Appendix~\ref{construct}). Hence,  evaluating $J_j:={\rm tr}_S[(G_j\otimes 1_B)\rho_{SB}(0)]$ via Eq.~(\ref{bplusgen}) yields
\[
J_j=\sum_{s,s'} a_s^* a_{s'} \bra{s} G_j \ket{s'}  \ket{\phi_{s'}}\bra{\phi_{s}}= \sum_\alpha w_\alpha \tilde\T_{\alpha j} \rho_\alpha .
\]
It follows from the first equality that $\sum c_j J_j=0$ if and only if $\sum_j c_j G_j=0$. Hence, since the $G_j$ are linearly independent, the $J_j$ also form a set of $d^2$ linearly independent operators. But from the second equality the $J_j$ are themselves linear combinations of the $\rho_\alpha$.  Thus, there must be no fewer than $d^2$ linearly independent $\rho_\alpha$ in the B+ decomposition (and hence, when \blk expanding in terms of a canonical B+ decomposition, there are exactly $d^2$ such $\rho_\alpha$). Correspondingly, it follows from Eq.~(\ref{indCPTP}) that there will typically be no fewer than $d^2$ linearly independent maps $\phi^{(\alpha)}_t$, providing the system-bath interaction is sufficiently nontrivial. Equivalently, in the absence of assumptions on the Hamiltonian ruling the evolution, $d^2$ linearly independent $\rho_\alpha$ will lead to to $d^2$ linearly independent maps $\phi^{(\alpha)}(\cdot)$. For example, if $U(t)$ corresponds to the swap operation at some time $t$, i.e., $U(t)(X\otimes Y)U(t)^\dagger=Y\otimes X$, for arbitrary $X$ and $Y$, then $\phi^{(\alpha)}_t(\cdot)=\tr{\cdot}\rho_\alpha$, and so the maps have precisely the same degree of linear independence as the $\rho_\alpha$ that generate them. 

Finally, we note there are  also  initial states  for which fewer than $d^2$ CPTP maps, \blk but more than one such \blk map,  are needed to describe the system dynamics.  As a first example, the argument of the preceding paragraph may be easily extended to show that a pure initial system-bath state with Schmidt rank $r$ requires no more than $r^2$ linearly independent CPTP maps.  A second example is provided by zero-discord initial states, for which~\cite{Discord}
\beq
\label{fac}
\rho_{SB}(0) = \sum_{\alpha=1}^d  w_\alpha \ket{\psi_\alpha}\bra{\psi_\alpha} \otimes \rho_\alpha,
\eneq
where the $\ket{\psi_\alpha}$ are mutually orthogonal. Noting that this already has the form of a B+ decomposition as per Eq.~(\ref{bplusgen}), it immediately follows  that at most $d$ independent dynamical maps are required to describe the evolution of zero-discord states. {\blk It is worth highlighting that for this very special case the dynamics for any choice of coefficients $w_\alpha$ summing to 1, i.e., for a $(d-1)$-dimensional space of states, is described by a single CPTP map, as shown in Ref.~\cite{CP1}. However, as will be shown in Sec.~\ref{implic}, the $d$ maps from the B+ decomposition describe the dynamics of a much larger space of initial states, having $2d(d-1)$ dimensions (obtained from $\rho_{SB}(0)$ via local operations on the system).  Finally, \blk we point out that a similar decomposition of zero discord states has been considered by Breuer, but with the roles of system and bath reversed~\cite{breuer07}. 

\subsection{Prediction and retrodiction in the presence of initial correlations}
\label{predret}

The most direct uses of the B+ decomposition come from its applicability to extend \blk techniques used to calculate the predicted dynamics of a quantum system under the factorisable initial state assumption, such as various master equation methods \blk \cite{ME1,ME2,ME3,ME4,ME5,ME6,ME7,ME8},  path integral methods~\cite{PI1,PI2,PI3,PI4}, and other techniques~ \cite{OpenQuantumSystemsReview, OpenQuantumSystemsBook2}. Recall that, in virtue of our decomposition, each of the maps $\phi_t^{(\alpha)}$ in Eq.~(\ref{indCPTP}) is  CPTP, since it originates from an initially uncorrelated operator $\rho^{(\alpha)}_{SB}:= Q_\alpha \otimes \rho_\alpha$. Typically, methods to compute the reduced dynamics {\it only} rely on the fact that ${\rm Tr}_S[\rho^{(\alpha)}_{SB}]$ is a valid density matrix but, crucially, make no stipulation about ${\rm Tr}_B [\rho^{(\alpha)}_{SB}]$.  In such cases, obtaining $\rho_S(t)$ is straightforward by applying such methods to each term in the decomposition and composing the outcomes as per Eq.~(\ref{reduced}). \blk The example of qubit dephasing is discussed in Sec.~\ref{dephasing} below. \blk Other methods, that require ${\rm Tr}_B[\rho^{(\alpha)}_{SB}]$ to satisfy particular properties, such as  purity, \blk can also be accommodated (see~\cite{LHWreview} for examples of ``Monte Carlo wave function simulations'' as we might call pure state techniques, both Markovian and non-Markovian). For example, the $Q_\alpha$ can always be expanded as a (not necessarily positive) linear combination of  projectors corresponding to pure states. It becomes then again a matter of solving each term in the expansion and combining the outcomes in the appropriate way. 

It should be highlighted that being able to predict the dynamics of a system using methods developed for CPTP maps is not the only interesting aspect. In fact, a similar argument can be used to {\it retrodict} the dynamics of the system, i.e., to estimate the density matrix of the system in the past, by computing $\phi_{-t}^{(\alpha)}$. Interesting questions, such as when were the system and bath in a factorisable state (if ever), can in principle be addressed. Obviously, retrodicting the dynamics of a state can also be done when the state is factorisable at time $t=0$, however doing so in that case is somewhat artificial. Being able to do so for an arbitrary state at $t=0$, as we can now, is certainly more natural. This is not merely an academic question requiring perfect knowledge of the system and bath. We will argue later in Section~\ref{QSense} that, under certain conditions, knowledge of $\phi^{(\alpha)}_{t>0}$ allows us to infer $\phi^{(\alpha)}_{t<0}$ and thus gives us the practical ability to retrodict the state of the system.  
 
\blk
\subsection{Example: Qubit dephasing for arbitrary initial correlations}
\label{dephasing}

As mentioned in Sec.~\ref{predret}, B+ decompositions allow one to immediately extend techniques used for solving the factorisable case to the general case. We demonstrate the power of this method here by showing how it may be used to fully solve an important model of quantum decoherence: qubit dephasing. In particular, we will show how  B+ decompositions allow the seamless extension of solutions devised for the factorisable case~\cite{massimo, reina, OpenQuantumSystemsBook2} to the general nonfactorisable case. 

The pure dephasing of a qubit coupled to a bosonic bath is described by the Hamiltonian~\cite{OpenQuantumSystemsBook2}
\beq
H = \half\epsilon\sigma_z + \sum_j \omega_jb^\dagger_jb_j + \sum_j g_j \sigma_z (b_j+b^\dagger_j) .
\eneq
Here $\epsilon$ is the qubit energy gap between
eigenstates of $\sigma_z$; $b_j$ and $b^\dagger_j$
are bath-mode annihilation and creation
operators, with corresponding frequencies $\omega_j$ and coupling strengths $g_j$ to the qubit; and units are such that $\hbar\equiv1$. This Hamiltonian provides a well-known energy-conserving model for qubit noise and decoherence, and the exact evolution of the qubit for the case of an initially uncorrelated  thermal bath is textbook material~\cite{OpenQuantumSystemsBook2}. More generally, the qubit evolution has been solved for all factorisable initial states~\cite{massimo} (and extended to multiple qubits~\cite{reina}). In contrast, the nonfactorisable case has been addressed explicitly for only a small set of initially correlated states~\cite{dajka}, and only implicitly for arbitrary initial states via the derivation of formal homogenous and inhomogenous master equations~\cite{ban}.

For a factorisable initial state $\rho_S\otimes \rho_B$, the diagonal elements of the qubit density operator with respect to the $\sigma_z$-basis are found to be constant in time~\cite{massimo, reina, OpenQuantumSystemsBook2}, i.e.,
\beq \label{dephasediag}
\langle0|\rho_S(t)|0\rangle = \langle0|\rho_S|0\rangle,~~~ \langle1|\rho_S(t)|1\rangle = \langle1|\rho_S|1\rangle,
\eneq
while the off-diagonal terms evolve in the interaction picture as per Eqs.~(4.9)--(4.13) of~\cite{massimo}, with
\beq \label{dephasefact}
\langle0|\rho_S(t)|1\rangle = \langle0|\rho_S(t)|1\rangle ~{\rm Tr}_B[\rho_B D(\bm\xi_t)] .
\eneq
Here $\bm \xi_t=(\xi_1(t),\xi_2(t),\dots)$, with 
\beq
\xi_j(t) := 2g_j \frac{1-e^{i\omega_jt}}{\omega_j} ,
\eneq
and $D(\bm\xi)=\exp(\sum_j \xi_j b^\dagger_j-\xi^*_jb_j)$ denotes the multimode Glauber displacement operator. Note that the scaling factor ${\rm Tr}_B[\rho_B D(\bm\xi_t)]$ in Eq.~(\ref{dephasefact}) is the characteristic function corresponding to the Wigner function of the initial bath state $\rho_B$~\cite{wigner,gauss}, and hence this equation may be rewritten as
\beq \label{dephasechi}
\langle0|\rho_S(t)|1\rangle = \langle0|\rho_S(t)|1\rangle ~\chi_{\rho_B}(\bm\xi_t).
\eneq
The important case of Gaussian bath states, including coherent, squeezed and thermal states, is characterised by $\chi_{\rho_B}(\bm\xi)$ being a Gaussian function of $\bm\xi$~\cite{gauss}, and thus qubit dephasing is particularly simple for such bath states~\cite{OpenQuantumSystemsBook2, massimo,reina}.

The qubit evolution for the general case of a correlated initial state $\rho_{SB}$ is now easily determined by the method of B+ decompositions.  First, choose any convenient B+ decomposition 
\beq
\rho_{SB}=\sum_\alpha w_\alpha Q_\alpha\otimes \rho_\alpha
\eneq 
as per Eq.~(\ref{bplusgen}), e.g., as per either of Examples~\ref{Ex1} and~\ref{Ex2} in Sec.~\ref{theory}. Second, let $\chi_{\rho_\alpha}(\bm \xi)$ denote the Wigner characteristic function of bath state $\rho_\alpha$ in this decomposition.  Defining $Q_\alpha(t):= \phi_t^{(\alpha)}(Q_\alpha)$, it  follows immediately from Eqs.~(\ref{indCPTP}, (\ref{dephasediag}) and~(\ref{dephasechi}) that
\beq
\langle0|Q_\alpha(t)|0\rangle = \langle0|Q_\alpha|0\rangle,~~~
 \langle1|Q_\alpha(t)|1\rangle = \langle1|Q_\alpha|1\rangle ,
\eneq
\beq 
\langle0|Q_\alpha(t)|1\rangle = \langle0|Q_\alpha|1\rangle ~\chi_{\rho_\alpha}(\bm\xi_t).
\eneq
Finally, substituting these results into $\rho_S(t)=\sum_\alpha w_\alpha Q_\alpha(t)$ as per Eq.~(\ref{gendynam}), the general qubit evolution is given by
\beq \label{dephasediag2}
\langle0|\rho_S(t)|0\rangle = \langle0|\rho_S|0\rangle,~~~ \langle1|\rho_S(t)|1\rangle = \langle1|\rho_S|1\rangle,
\eneq
and 
\beq \label{dephasechi2}
\langle0|\rho_S(t)|1\rangle = \sum_\alpha w_\alpha \langle0|Q_\alpha|1\rangle ~\chi_{\rho_\alpha}(\bm\xi_t),
\eneq
generalising Eqs.~(\ref{dephasediag}) and~(\ref{dephasechi}) for the uncorrelated case. Note that the diagonal elements are constant for the general case, similarly to the uncorrelated case, while the off-diagonal elements are a weighted sum of the Wigner characteristic functions of the bath states $\rho_\alpha$. 

The above example shows that using B+ decompositions provides a straightforward mechanism for turning the problem of correlated initial states into the problem of factorisable initial states, and hence allowing the general case to be solved via known methods for the factorisable case. Applications of B+ decompositions going beyond known methods for factorisable states will be discussed in Secs.~\ref{retrosec} and~\ref{LAT}. \blk

\blk Lastly, it is worth remarking that, in analogy to the factorisable case, Eq.~(\ref{dephasechi2}) is particularly simple for {\it Gaussian} B+ decompositions, in which the bath states $\rho_\alpha$  are Gaussian states. Such Gaussian B+ decompositions are expected to provide useful approximations for the qubit evolution in the case that $\rho_{SB}$ is close to the product of the initial system state with a Gaussian bath state, such as a thermal bath state. {\blk In this case, measurement of a POVM $\{P_\alpha\}$ as per the Example~\ref{Ex2} in Sec.~\ref{canonbplus} will typically extract little information about the bath state, so that the corresponding bath state $\rho_\alpha$ in Eq.~(\ref{rhoalpha}) will remain close to a Gaussian state, and hence can be well approximated by a Gaussian state having the same mean and covariance properties as $\rho_\alpha$. This is a worthwhile topic for future investigation.

\blk
\subsection{Extended applicability \blk under local system operations}
\label{implic}

As remarked in the Introduction, the use of a single CPTP map or quantum channel, to describe open system evolution, is restricted to a very small class of initial system-bath states~\cite{CP1,CP2,CP3,CP4,CP5,CP6,CP7,CP8,CP9,CP10,CP11,CP12}.	For example, for a $d$-dimensional system, the dynamical map $\phi_t$ in Eq.~(\ref{CPTPfact})  for an uncorrelated \blk bath state $\rho_B$ only applies to the evolution of a $(d^2-1)$-dimensional space of factorisable initial states, of the form $\rho_S\otimes \rho_B$.  

Here we demonstrate that, in contrast, the set of dynamical maps $\{\phi^{(\alpha)}_t\}$ in Eq.~(\ref{indCPTP}) \blk may be applied \blk to the evolution of a much larger space of initial states, having up to $d^4-1$ dimensions. This space corresponds to \blk precisely those states that can be prepared from $\rho_{SB}(0)$ via  local operations (including measurements) \blk on the system. Thus, while up to $d^2$ CPTP maps may be needed for describing dynamics of an initially correlated state, they have the corresponding predictive advantage of describing the evolution of up to a $(d^4-1)$-dimensional space of operationally-related initial states. \blk For the qubit dephasing example in Sec.~\ref{dephasing}, this corresponds to being able to use the same 4 characteristic functions $\chi_{\rho_\alpha}(\bm \xi)$ to solve for the evolution of a 15-dimensional space of initially-correlated states. In contrast, in the uncorrelated case the single CPTP map $\phi_t$ only provides the solution for a 3-dimensional space of initial states.  

In particular, let $\R$ be a CP linear \blk map acting on the system, corresponding to some operation. It may be trace-decreasing, in which case it corresponds to a measurement, 
also called a selective or filtering \blk operation  For example, $\R_\psi(X):=|\psi\rangle\langle\psi|X|\psi\rangle\langle\psi|$ describes  an ideal projective measurement that projects $\rho_S$ onto ket $|\psi\rangle$ with probability $p_\psi= \langle\psi|\rho_s|\psi\rangle$.  More generally, a system map $\R$ acting on the system-bath state $\rho_{SB}$ will prepare it in a state $\rho^\R_{SB}$ with probability $p_\R$, defined implicitly via
\beq \label{rhor}
p_\R\, \rho^\R_{SB}:=(\R\otimes I_B)(\rho_{SB}) . 
\eneq
Here $I_B$ denotes the identity map on the bath. The class of system-bath states that can be prepared in this way, with $p_\R>0$, will be denoted by  ${\cal S}_{\rho_{SB}}$. Note that this class includes all factorisable states $\tau\otimes\rho_B$ as a special subclass, where $\rho_B={\rm Tr}[\rho_{SB}]$, since such states are generated by the corresponding local replacement operations $\R_{\tau}(X):=\tau\,{\rm Tr}_S[X]$. Factorisable  states  are in fact the only states that can be prepared from a factorisable initial state $\rho_{SB}=\rho_S\otimes\rho_B$,  via local system operations.  More typically, however, ${\cal S}_{\rho_{SB}}$ contains many further states, as shown in more detail below.

We now \blk show that the set of dynamical maps $\{\phi_t^{(\alpha)}\}$ in Eq.~(\ref{indCPTP}) not only determines the system evolution for the initial state $\rho_{SB}(0)$, but for the entire class of states in ${\cal S}_{\rho_{SB}(0)}$: 

\begin{theorem} 
\label{Th1}
	The system dynamics for each member of the class of initial states ${\cal S}_{\rho_{SB}(0)}$, obtained from $\rho_{SB}(0)$ by performing local operations on the system, is determined by the single set of dynamical maps $\{\phi_t^{(\alpha)}\}$ defined in Eq.~(\ref{indCPTP}).  
\end{theorem}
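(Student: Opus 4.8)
The plan is to exploit the fact that a local system operation, applied to a B+ decomposition, leaves the bath states $\rho_\alpha$ completely unchanged and merely relabels the system operators attached to them. Since the dynamical maps $\phi_t^{(\alpha)}$ of Eq.~(\ref{indCPTP}) are built solely from the bath states $\rho_\alpha$ and the fixed joint evolution $U(t)$, this will show at once that the very same set of maps governs the reduced dynamics of every member of ${\cal S}_{\rho_{SB}(0)}$.

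First I would take a fixed B+ decomposition $\rho_{SB}(0)=\sum_\alpha w_\alpha Q_\alpha\otimes\rho_\alpha$ and act on it with $\R\otimes I_B$. Because $\R$ touches only the system tensor factor while $I_B$ acts trivially on the bath, linearity gives
\[
p_\R\,\rho^\R_{SB} = (\R\otimes I_B)(\rho_{SB}) = \sum_\alpha w_\alpha\, \R(Q_\alpha)\otimes\rho_\alpha .
\]
The key point to emphasise is that the bath operators appearing here are precisely the original $\rho_\alpha$; only the system operators have been transformed, $Q_\alpha\mapsto\R(Q_\alpha)$, while the weights $w_\alpha$ are unchanged.

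Next I would propagate this prepared state to time $t$ and trace out the bath. Using the definition of $\phi_t^{(\alpha)}$ and the linearity of both the partial trace and conjugation by $U(t)$,
\[
\rho^\R_S(t) = \frac{1}{p_\R}\sum_\alpha w_\alpha\, {\rm Tr}_B[U(t)(\R(Q_\alpha)\otimes\rho_\alpha)U(t)^\dagger] = \frac{1}{p_\R}\sum_\alpha w_\alpha\, \phi_t^{(\alpha)}(\R(Q_\alpha)) .
\]
This is exactly the desired conclusion: the same maps $\{\phi_t^{(\alpha)}\}$, now evaluated on the transformed elements $\R(Q_\alpha)$ in place of $Q_\alpha$, together with an overall normalisation by $p_\R=\sum_\alpha w_\alpha\,{\rm Tr}_S[\R(Q_\alpha)] = {\rm Tr}[(\R\otimes I_B)(\rho_{SB})]$.

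The subtlety requiring care --- though it is conceptually minor rather than a genuine obstacle --- is that $\{\R(Q_\alpha)\}$ generally fails to be a basis: $\R$ may be trace-decreasing and even rank-reducing (as with the projector $\R_\psi$), so the middle expression above is no longer a B+ decomposition in the sense of Eq.~(\ref{bplusgen}). I would stress that this is harmless, since the argument never uses any spanning property of $\{\R(Q_\alpha)\}$; it relies only on the fact that the fixed bath states $\rho_\alpha$, and hence the fixed CPTP maps $\phi_t^{(\alpha)}$, are reused. One should also check that $\phi_t^{(\alpha)}$ is applied by linear extension to the (possibly non-positive) operators $\R(Q_\alpha)$, which is consistent with its use already in Eq.~(\ref{reduced}), where it acts on the not-necessarily-positive $Q_\alpha$.
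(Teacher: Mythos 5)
Your proof is correct and follows essentially the same route as the paper's: act with $\R\otimes I_B$ on the B+ decomposition, observe that the weights $w_\alpha$ and bath states $\rho_\alpha$ (hence the maps $\phi_t^{(\alpha)}$) are untouched, and read off $\rho_S^\R(t)$ by linearity with the normalisation $p_\R=\sum_\alpha w_\alpha {\rm Tr}[\R(Q_\alpha)]$. The only cosmetic difference is that the paper makes your final linearity step explicit by expanding $\R(Q_\alpha)=\sum_{\alpha'}R_{\alpha\alpha'}Q_{\alpha'}$ in the fixed basis (with $R_{\alpha\alpha'}={\rm Tr}[\R(Q_\alpha)P_{\alpha'}]$), which is the form later reused for the superchannel representation, so your remark that $\phi_t^{(\alpha)}$ acts on $\R(Q_\alpha)$ by linear extension is precisely what that expansion encodes.
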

\begin{proof}
Note first from Eq.~(\ref{bplusgen}) that
\begin{align} \label{rhorsb}
p_{\R}\,\rho^\R_{SB}(0) & =\sum_\alpha w_\alpha \mathcal{R}(Q_\alpha) \otimes \rho_\alpha=\sum_{\alpha,{\alpha'}} w_\alpha R_{\alpha{\alpha'}} \,Q_{\alpha'}\otimes \rho_\alpha ,
\end{align}
where $\sum_{\alpha'}R_{\alpha{\alpha'}}Q_{\alpha'}$ is any expansion of the system operator $\R(Q_\alpha)$ with respect to basis $\{Q_\alpha\}$ (from Eq.~(\ref{decom}) we may take $R_{\alpha\alpha'}={\rm Tr}[\R(Q_\alpha)P_{\alpha'}]$). Taking the trace over the bath yields $p_\R=\sum_\alpha w_\alpha {\rm Tr}[\R (Q_\alpha)]$, and hence the subsequent system evolution is determined by the maps $\{\phi_t^{(\alpha)}\}$ via
\begin{align} \label{proof1}
\rho^\R_S(t)	& := {\rm Tr}_B[\rho^\R_{SB}(0)]= \frac{\sum_{\alpha,{\alpha'}} w_\alpha R_{\alpha{\alpha'}}\, \phi^{(\alpha)}_t(Q_{\alpha'})}{\sum_\alpha w_\alpha {\rm Tr}[\R (Q_\alpha)]} ,
\end{align}
as required. Note that in the special case where no operation is performed, i.e., $\R=I_S$, Eq.~(\ref{proof1}) reduces to Eq.~(\ref{reduced}) for $\rho_S(t)$. 
\end{proof}

To determine the size of ${\cal S}_{\rho_{SB}(0)}$, note that the set of local system operations, $\{\R\}$, has $d^4$ real degrees of freedom for the case of a $d$-dimensional system. Hence, noting the constraint $\tr{\rho^\R_{SB}}=1$, the set of initial states ${\cal S}_{\rho_{SB}}$ prepared from a given state $\rho_{SB}$ via such local operations spans at most $(d^4-1)$ dimensions. The maximum is reached, for example, for any pure initial state $\rho_{SB}(0)=\ket{\Psi}\bra{\Psi}$ having Schmidt rank $d$ (see Sec.~\ref{aset}), noting that the mapping $\R\rightarrow (\R\otimes I_B)(\ket{\Psi}\bra{\Psi})$ is linear and 1:1 for such states. 

\blk As a further example, note for the zero-discord initial state in Eq.~(\ref{fac}) that the local operation $\R(X)=\sum_\alpha |\phi_\alpha\rangle\langle\psi_\alpha| X|\psi_\alpha\rangle\langle\phi_\alpha|$ maps  the $d$ orthogonal system states $\psi_\alpha$ to $d$ arbitrary system states $|\phi_\alpha\rangle$. Hence, since a pure state is described up to normalisation and a global phase by $2d-2$ parameters, it follows that the set of initial states ${\cal S}_{\rho_{SB}} =\{ \sum_{\alpha=1}^d  w_\alpha \ket{\phi_\alpha}\bra{\phi_\alpha} \otimes \rho_\alpha\}$, has $2d(d-1)$ dimensions, with its evolution described by $d$ CPTP maps. \blk

\subsection{Relation to superchannels, process tensors, and process tomography} 
\label{sch}

In Ref.~\cite{modi}, Modi has recently suggested representing the evolution of an open quantum system by a superchannel, ${\cal C}_t$, which maps the set of trace-preserving local system operations to the state of the system at time $t$ (see also~\cite{modi1,modi2} for very interesting generalisations of this approach). Thus, in our notation, $\rho^\R_S(t)={\cal C}_t(\R)$, where $\R$ is now restricted to be a CPTP map, i.e., \blk with $p_\R=1$. It immediately follows from Eq.~(\ref{proof1}) that this superchannel can be explicitly represented in terms of the CPTP maps $\{\phi_t^{(\alpha)}\}$, via
\beq
\label{superch}
 \rho^\R_S(t)={\cal C}_t(\R) = \sum_{\alpha,{\alpha'}} w_\alpha R_{\alpha{\alpha'}}\, \phi^{(\alpha)}_t(Q_{\alpha'}) .
\eneq 
Thus, B+ decompositions lead to corresponding decompositions of the superchannel (with simple operational interpretations in the case of canonical B+ decompositions). {The superchannel description of open quantum system dynamics and its generalisations are a powerful tool. However, an advantage of using a representation of system dynamics based on the $\{\phi_t^{(\alpha)}\}$, rather than dealing directly with the superchannel, is that the problem of determining the system evolution is reduced to the consideration of CPTP maps acting directly on system operators, for which many tools exist~\cite{OpenQuantumSystemsReview,OpenQuantumSystemsBook1,OpenQuantumSystemsBook2} (see also Secs.~\ref{aset} and~\ref{predret}). \blk Further, the equivalence in Eq.~(\ref{superch}) implies  that methods developed in the context of superchannels can be directly translated into the familiar language of CPTP  system \blk maps via the judicious application of our results.  }

For example, \blk while it is possible to experimentally determine the superchannel ${\cal C}_t$ at a given time $t$, in terms of a basis set of CPTP maps $\{\R^{(j)}\}$~\cite{modiexp}, this is equivalent to experimentally determining the maps $\phi^{(\alpha)}_t(\cdot)$ at time $t$. In particular, applying the basis map $\R^{(j)}$ on the system at time $0$ and evolving to time $t$ allows the $\rho_S^{\R^{(j)}}(t)$ to be tomographically reconstructed~\cite{PT1, PT2}. Repeating this procedure for each basis map, Eq.~(\ref{superch}) then yields the set of linear equations
\beq 
\rho^{\R^{(j)}}_S(t) = \sum_{\alpha, \alpha',\beta} R^{(j)}_{\alpha\alpha'} F^{(\alpha)}_{\alpha'\beta}(t) Q_\beta ,
\eneq
where $F^{(\alpha)}_{\alpha'\beta}(t):=w_\alpha\tr{ \phi^{(\alpha)}_t(Q_{\alpha'})P_\beta}$, and  Eq.~(\ref{decom}) has been used with $A=\phi^{(\alpha)}_t(Q_{\alpha'})$. Choosing a canonical B+ decomposition for convenience, so that the $Q_\beta$ are linearly independent, as are the representations $R^{(j)}_{\alpha\alpha'}$ of the basis maps, this determines $F^{(\alpha)}_{\alpha'\beta}(t)$ uniquely. The dynamical map $\phi^{(\alpha)}_t$ can be determined via its action on the basis operators $\{Q_\alpha\}$:
\beq
\phi^{(\alpha)}_t(Q_{\alpha'}) = (w_\alpha)^{-1}\sum_\beta F^{(\alpha)}_{\alpha'\beta}(t) Q_\beta
\eneq
(recall that $w_\alpha$ is determined by the initial state $\rho_S(0)$). A different approach is taken in Sec.~\ref{QSense}, where a generalised quantum sensing protocol is introduced to effectively estimate the $\phi^{(\alpha)}_t$ for a continuous range of times $t$, rather than at a single time.
\blk

Superchannels can be generalised to the scenario where multiple local operations are applied at times $t_i$ of the evolution, $\mathcal{R}^{(i)}_{t_i}$. This is achieved using the so-called process tensor, introduced in Refs.~\cite{modi, modi1,modi2,costa}, that is a completely positive supermap taking the local operations $\mathcal{R}^{(i)}_{t_i}$ to a state $\rho_S(T)$. It is also possible to generalise our formalism in this direction. Applying $N+1$ local operations each followed by a system bath unitary $U^{(i)}$, one has 
\begin{align*}
\rho_S(T) &=\textrm{Tr}_B [ U^{(N)}\cdot ( \mathcal{R}^{(N)}_{t_{N}}(\cdots (U^{(0)} \cdot (\mathcal{R}^{(0)}_{t_{0}}\cdot (\rho_{SB}(0)) \cdots )]\\
&= \sum_{\vec{\beta},\vec{\beta'}} w_{\beta_0} R^{(0)}_{\beta_0,\beta'_0} R^{(1)}_{\beta_1,\beta'_1} \cdots R^{(N)}_{\beta_N,\beta'_N} \times  \textrm{Tr}[ P_{\beta_1} \phi^{(\beta_0)}_0 (Q_{\beta'_0})] \\
&\,\,\,\,\,\,\, \textrm{Tr}[ P_{\beta_2} \phi^{(\beta_1)}_1 (Q_{\beta'_1})] \cdots \textrm{Tr}[ P_{\beta_{N+1}} \phi^{(\beta_{N})}_N (Q_{\beta'_{N}})]  Q_{\beta_{N+1}}
\end{align*}
where $U\cdot X = U X U^\dagger$, $\textrm{Tr}_B[ U^{(i)}  Q_\alpha \otimes \rho_\alpha {U^{(i)}}^\dagger] =  \phi^{(\alpha)}_i ( Q_\alpha)$ and we have used the observation that 
$$
U^{(i)} (Q_\alpha \otimes \rho_{\alpha'}){U^{(i)}}^\dagger = \sum_{\gamma} \tr{P_\gamma \phi^{(\alpha')}_i (Q_\alpha)}\,\, Q_\gamma \otimes \rho'_\gamma,
$$  where $\rho'_{\gamma} = \textrm{Tr}_S [(P_\gamma\otimes I_B) (U^{(i)} (Q_\alpha \otimes \rho_{\alpha'}){U^{(i)}}^\dagger )]$. 
The above expression shows that, in direct analogy to the process tensor, one can write the final state $\rho_S(T)$ as a function of the components $R^{(i)}_{\beta,\beta'}$ which fully determine $\mathcal{R}^{(i)}$, while all the information about the dynamics an initial state is stored in the $p_{\beta_0}$ and the components $\{f^{i,\beta_i}_{\beta_{i+1},\beta'_i} \equiv \textrm{Tr}[ P_{\beta_{i+1}} \phi^{(\beta_i)}_i (Q_{\beta'_i})]\}$ uniquely determining the CPTP maps $\phi_i^{\beta_i}$. As in the scenario of the single operation, and much in the same way that is done for process tensors, by cycling over an approrpiate set of operations $\mathcal{R}^{(i)}$ it is possible to recover the strings $\{f^{N,\beta_N}_{\beta_{N+1},\beta'_N} \cdots f^{0,\beta_0}_{\beta_{1},\beta'_0}\}$ and thus reconstruct the full expression for $\rho_S(T)$.

\subsection{Computational Markovianity} 
\label{sec:dm}

The notion of Markovian (i.e., memoryless) dynamics for an open quantum system has been the object of intense study, and is at the heart of many common methods to describe open quantum systems~\cite{OpenQuantumSystemsBook1,OpenQuantumSystemsBook2}. Many popular definitions of quantum Markovianity, such as decreasing state distinguishability~\cite{BLP2009} and divisibility~\cite{RHP2010}, are written in terms of a dynamical map describing the evolution of the system. Thus, these definitions implicitly assume the factorisable initial state condition. Indeed, it might be argued that the evolution of a system initially correlated with its environment is necessarily non-Markovian, as a `memory' of the initial system state could be propagated via these correlations.

However, our observation is that even though the evolution of the system state $\rho_S(t)$ will in general depend  on the initial joint state $\rho_{SB}(0)$, the dynamics itself can be memoryless, in the sense that in Eq.~\eqref{reduced} the evolution induced by the system bath unitary dynamics and each of the $\rho_\alpha$ can be memoryless (or Markovian) according to one or more of the definitions used for initially factorisable states~\cite{RHPreview, Breuerreview,LHWreview}. In such  a case, any memory of the initial state is \blk only encoded in the initial weights $\{w_\alpha\}$ in Eq.~(\ref{rhos}).  In particular, none of the maps $\phi^{(\alpha)}_t$ describing the dynamics has any such memory. Thus, Markovian methods can be used to calculate $\rho_S(t)$ in this  case, by \blk solving  up to \blk $d^2$ parallel Markovian dynamical equations.  This motivates the following: 

\begin{definition} \label{defdm}{\rm (computational Markovianity)} {\it Let ${\cal M}[H(t),\rho_B]$ be some definition or criterion of Markovianity, for the evolution of a system induced by system-bath Hamiltonian $H(t)$ \blk and an initially uncorrelated bath state $\rho_B$. We then say that the corresponding evolution for an initially correlated system-bath state $\rho_{SB}$ is computationally Markovian, relative to this definition or criterion, if and only if there exists a B+ decomposition of $\rho_{SB}$ such that ${\cal M}[H(t),\rho_\alpha]$ holds for all $\alpha$ with $w_\alpha\neq 0$.}
\end{definition}

{\blk We stress that, as well as allowing a generalisation of the many different notions of Markovianity in the literature~\cite{RHPreview, Breuerreview,LHWreview} to the non-factorisable scenario, our definition of computational Markovianity has mainly a practical motivation: the dynamics of an initially correlated system is computationally Markovian if it can be faithfully calculated by using Markovian methods.}

Note that the definition of computational Markovianity applies not only to cases where ${\cal M}(H(t),\rho_B)$ can be formulated solely in terms of the dynamical map~\cite{RHPreview, Breuerreview}, but also to cases where ${\cal M}(H(t),\rho_B)$ depends explicitly on properties of $H(t)$ and/or $\rho_B$ (e.g., for definitions corresponding to quantum white noise or the quantum regression formula)~\cite{LHWreview}.  The above  definition  implies, as should be expected, that if computational Markovianity holds for some  pair $(H(t), \rho_{SB})$, then it also holds for any pair  $(H(t), \rho^{\cal R}_{SB})$,  where $\rho^{\cal R}_{SB}$ obtained from $\rho_{SB}$ via a local system operation $\cal R$ as per Eq.~(\ref{rhor}).  In particular,  replacing $Q_\alpha$ by $\R(Q_\alpha)$ yields a B+ decomposition of $\rho^{\cal R}_{SB}$ having the same $\{w_\alpha\}$ and $\{\rho_\alpha\}$.  \blk It would be of interest to explore computational Markovianity for the Gaussian $B+$ decompositions discussed in Sec.~\ref{dephasing} for qubit dephasing, and compare with corresponding results for the factorisable case~\cite{haikka,giacomo}.

While it is relatively easy to check whether a given B+ decomposition of the initial state is Markovian relative to \blk a given definition ${\cal M}(H(t),\rho_B)$, \blk it is harder to determine whether the evolution is computationally Markovian for a given $\rho_{SB}(0)$. \blk That is because determining this \blk potentially involves searching the space of B+ decompositions. In Appendix~\ref{DMark} we provide a protocol that requires a fixed number of measurements (necessary to characterise the evolution relative to one B+ decomposition), and does the search via classical processing. 
	
{In some cases, the question of computational Markovianity for the evolution is no more difficult than the question for a fixed B+ decomposition. A} prime example of this situation is verifying computational Markovianity for the  criterion which associates the complete failure of dynamical decoupling (DD) with Markovian dynamics~\cite{LHWreview}. In this scenario, if there exists a frame such that the evolution induced by $H(t), \rho_\alpha$ is invariant under the action of DD pulses, then in this frame {\it all} the $\phi_t^{(\alpha)}(\cdot)$ resulting from any such evolutions are invariant under DD. Now, \blk in a different frame, the associated maps ${\phi'}_t^{ (\alpha)}(\cdot)$ are necessarily linear combinations of the original frame, i.e., ${\phi'}_t^{\alpha}(\cdot) = \sum_{\beta} C_{\beta,\alpha} \phi_t^{\alpha}(\cdot)$. Hence, \blk invariance of the 
$\phi_t^{(\alpha)}(\cdot)$ necessarily implies invariance of the  ${\phi'}_t^{(\alpha)}(\cdot)$, i.e., the evolution induced by the maps associated with {\it any} B+ decomposition is automatically also invariant. Thus computational Markovianity, in the sense of the DD criterion, is satisfied in one frame if and only if it is satisfied in every frame.

\section{Completing the picture:\\  Extended Quantum Sensing Protocols for arbitrary initial conditions.}
\label{QSense}

In recent years~\cite{Spectro1,Spectro2,Spectro3,Spectro4,Spectro5,Spectro6,Spectro7,Spectro8,Spectro9}, bath correlations have been recognised as the key to satisfying the {\it sufficient dynamical information} condition in Sec.~\ref{introduction} \blk and to achieve high fidelity operations, when the bath cannot be directly accessed. Thus, noise spectroscopy protocols of varying generality have been developed. Their limitations come either in terms of restrictions to mathematically amenable noise models or in terms of the the detail and quality of information that can be obtained about the bath correlations~\cite{Spectro1,Spectro2,Spectro3,Spectro4,Spectro5,Spectro6,Spectro7,Spectro8,Spectro9}. While a complete solution, i.e., capable of reconstructing the bath correlations to arbitrary order with minimal assumptions on the noise model, is the objective of current efforts, this principle applies more generally: a quantum system can be used to extract information about the bath it interacts with. Indeed, the use of quantum probes to extract information in this way is the purview of general {\it quantum sensing protocols} (QSPs), recently reviewed  in~\cite{QSensing}, \blk of which noise spectroscopy is perhaps the most ambitious example. Existing sensing protocols have so far been studied in the factorisable regime, and can be formally defined as follows. 

\begin{definition} \label{defsp}  {\rm (Quantum sensing protocols)} Let the dynamics of a {quantum probe {$S$} and a probed system} $B$ \blk be ruled by the Hamiltonian $H(t) = \sum_{b} V_b \otimes B_b(t)$, and let the initial state be of the form $\rho_{SB}(t=0) = \rho_S \otimes \rho_B$. A quantum sensing protocol consists of a set of (possibly adaptively chosen)
\begin{itemize}
\item[(i)] initial system states $\{\eta_\mu\}$,
\item[(ii)] system control Hamiltonians $H_{\beta}(t)$ ,
\item[(iii)] system observables $\{O_\gamma\}$, and
\item[(iv)] a classical processing routine \texttt{Sense}. 
\end{itemize}
The routine \texttt{Sense} takes as inputs the expectation values $E_{\eta_\mu,H_\beta,O_\gamma,\rho_B} = {\rm Tr} [  U^{(\beta)}(\eta_\mu \otimes \rho_B) {U^{(\beta)}}^\dagger O_\gamma]$, with $U^{(\beta)}$ the evolution generated by $H(t)+ H_\beta(t)$ from $t=0$ to a time $t=T$ that may depend on the choice of $H_\beta(t)$. Its output {\rm OUT} is some desired information about parameter(s) of the bath or probed system. \blk We will say that the protocol is {\em restricted} when, given $H(t)$, assumptions on $\rho_B$ are needed for \texttt{Sense} to be well-defined, i.e., to give accurate reconstructions in principle, and \emph{generic} when $\rho_B$ can be arbitrary.
\end{definition}

The above definition of QSPs \blk should be understood to include \blk the scenario when the quantity to be sensed is classical in nature, i.e., the probe couples to an stochastic parameter one wants to characterise. In this case, the `quantum' average with respect to the state of the bath, $\langle \cdot \rangle = \tr{\cdot \rho_B}$, is replaced by the classical mean, e.g., averaging over realisations of the stochastic process. This broad definition encompasses various well-known applications~\cite{QSensing}. On one side of the spectrum, traditional phase estimation protocols~\cite{PhEst1} can be seen as sensing routines for estimating a constant process $B(t)=B_0$. At \blk the other end, general noise spectroscopy protocols represent the most ambitious version of sensing as it seeks to characterise the dominant correlation functions $\langle B_{b_1}(\omega_1) \cdots B_{b_k}(\omega_k)\rangle_{k \leq K}$, for some finite $K$, of an arbitrary noise process $B_{b_1}(\omega_1)$, where $B_j(\omega)$ denotes the Fourier transform of $B_j(t)$. 

{\blk While extremely useful, and indeed one of the cornerstones of quantum technologies, a limitation of current quantum sensing protocols is their reliance on the initially factorisable condition. In many scenarios this can be a reasonable assumption, but it is by no means guaranteed that a probe  and the  probed system are initially uncorrelated. This is particularly true in the context of noise spectroscopy protocols and their application to high fidelity control. Moreover, it may be, for example, that the information of interest is encoded in the initial correlations.  In Section~\ref{LAT}, we provide an example where we showcase the importance of extending QNS (particularly quantum noise spectroscopy) protocols to arbitrary initial conditions. Thus, it is of general interest to extend QSP's to arbitrary initial conditions. Let us see how it can be done.}

\begin{theorem} \label{thm2}
	Any QSP defined in the $\rho_{SB} = \rho_S \otimes \rho_B$ scenario can be extended to the correlated \blk initial condition scenario, i.e., $\rho_{SB} \neq \rho_S \otimes \rho_B$, by adding the ability to perform system-only CP operations at $t=0$.

We will call such a protocol an {\it extended} QSP (eQSP). \blk
 \end{theorem}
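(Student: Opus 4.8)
The plan is to use the bath-positive machinery of Sec.~\ref{theory} to reduce the correlated problem to $d^2$ copies of the factorisable problem already handled by the original QSP. Fix any canonical B+ decomposition $\rho_{SB}=\sum_\alpha w_\alpha Q_\alpha\otimes\rho_\alpha$ as in Eq.~(\ref{bplusgen}). For each bath state $\rho_\alpha$ and each system-control Hamiltonian $H_\beta(t)$ of the protocol, let $U^{(\beta)}$ be the evolution to the protocol's final time $T$ generated by $H(t)+H_\beta(t)$, and define the CPTP map
\[
\phi^{(\alpha)}_\beta(\cdot):={\rm Tr}_B[\,U^{(\beta)}(\cdot\otimes\rho_\alpha){U^{(\beta)}}^\dagger\,] .
\]
This is precisely the dynamical map that a factorisable QSP would generate with probe evolution $U^{(\beta)}$ and bath state $\rho_\alpha$. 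The inputs required by the routine \texttt{Sense} in Definition~\ref{defsp}, namely $E_{\eta_\mu,H_\beta,O_\gamma,\rho_\alpha}={\rm Tr}[\phi^{(\alpha)}_\beta(\eta_\mu)O_\gamma]$, are linear functionals of these maps, so the whole argument reduces to showing that the added ability to perform local CP operations at $t=0$ suffices to reconstruct each $\phi^{(\alpha)}_\beta$.

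First I would establish that local operations expose the individual maps. Applying a local CP operation $\R$ at $t=0$, evolving to time $T$ under $H(t)+H_\beta(t)$, and measuring $O_\gamma$ would yield, by Theorem~\ref{Th1} and Eq.~(\ref{proof1}), the probability-weighted expectation
\[
p_\R\,{\rm Tr}[\rho^\R_S(T)O_\gamma]=\sum_{\alpha,\alpha'}w_\alpha R_{\alpha\alpha'}\,{\rm Tr}[\phi^{(\alpha)}_\beta(Q_{\alpha'})O_\gamma],\quad R_{\alpha\alpha'}={\rm Tr}[\R(Q_\alpha)P_{\alpha'}] ,
\]
which is exactly the superchannel relation of Eq.~(\ref{superch}) traced against $O_\gamma$. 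Cycling $\R$ over a spanning set of local CP operations, as in Sec.~\ref{sch}, gives a linear system that I would invert in classical post-processing to recover each elementary quantity $w_\alpha\,{\rm Tr}[\phi^{(\alpha)}_\beta(Q_{\alpha'})O_\gamma]$; dividing by the weights $w_\alpha={\rm Tr}[P_\alpha\rho_S]$ of Eq.~(\ref{weights}) (themselves obtainable by measuring the informationally-complete POVM $\{P_\alpha\}$ on the system) then isolates ${\rm Tr}[\phi^{(\alpha)}_\beta(Q_{\alpha'})O_\gamma]$ for every $\alpha,\alpha',\gamma$.

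Next I would assemble the factorisable inputs and run the original routine unchanged. Since $\{Q_{\alpha'}\}$ is a basis, any desired probe state expands as $\eta_\mu=\sum_{\alpha'}{\rm Tr}[\eta_\mu P_{\alpha'}]\,Q_{\alpha'}$, so by linearity $E_{\eta_\mu,H_\beta,O_\gamma,\rho_\alpha}=\sum_{\alpha'}{\rm Tr}[\eta_\mu P_{\alpha'}]\,{\rm Tr}[\phi^{(\alpha)}_\beta(Q_{\alpha'})O_\gamma]$ is computable from the reconstructed data for every $\alpha$. Feeding these values into \texttt{Sense} once for each $\alpha$ reproduces the factorisable protocol with $\rho_B\to\rho_\alpha$, and hence returns the desired bath information for every $\rho_\alpha$ in the decomposition, i.e., for the correlated state. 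The only new experimental primitive is the local CP operation at $t=0$, so the construction is an eQSP as claimed.

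The hard part will be the invertibility step: the physically realisable $\R$ form a convex cone rather than a linear space, so I must argue that their induced coefficient matrices $R_{\alpha\alpha'}$ span the full operator space, with negative coefficients entering \emph{only} through classical post-processing of the collected expectations. This is the same issue resolved in superchannel and process tomography~\cite{modiexp,PT1,PT2}, and it goes through because the Hermiticity-preserving supermaps are linearly spanned by CP maps. A milder secondary point is that the observable set $\{O_\gamma\}$ together with the inversion must be informationally sufficient to fix the needed functionals; if the original protocol's observables are incomplete one simply augments them, which Definition~\ref{defsp} permits since the $\{O_\gamma\}$ are a free choice of the protocol.
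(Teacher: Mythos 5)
Your proposal is correct and follows essentially the same route as the paper's own proof: fix a B+ decomposition, apply local CP operations $\R^{(j)}$ at $t=0$ so that each measured expectation becomes a linear combination $\sum_{\alpha,\alpha'} w_\alpha R^{(j)}_{\alpha\alpha'} E_{Q_{\alpha'},H_\beta,O_\gamma,\rho_\alpha}$ with known coefficients, invert this linear system classically (with $w_\alpha$ obtained from system tomography), and then use the basis property of $\{Q_{\alpha'}\}$ to assemble $E_{\eta_\mu,H_\beta,O_\gamma,\rho_\alpha}$ for arbitrary $\eta_\mu$ and feed \texttt{Sense} once per $\rho_\alpha$. Your explicit treatment of the cone-versus-span issue for physical $\R$ is a sharpening of the paper's appeal to ancilla-assisted realisation of arbitrary maps and the argument of Sec.~\ref{sch}, but it is a refinement of the same argument rather than a different approach.
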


 \begin{proof}
 	We need to show that, for any given arbitrary initial state  $\rho_{SB}(0)$, \blk observable $O_\gamma$, and control Hamiltonian $H_\beta(t)$, it is possible to obtain the expectation value $E_{\eta_\mu, H_\beta,O_\gamma,\rho_\alpha}$. for any desired $\eta_\mu$ and all $\rho_\alpha$  in a B+ decomposition  $\rho_{SB}(0) = \sum_{\alpha} w_\alpha Q_\alpha \otimes \rho_\alpha$ of the initial state\blk---despite not having access to an initial state of the form $\eta_\mu \otimes \rho_\alpha$. If this can be done, \blk then the  QSP \blk is directly applicable to each of the $\rho_\alpha$ associated with the B+ decomposition for the basis \blk $\{ P_\alpha\}$, and  also (via linearity) \blk to any $\rho'_\alpha$ resulting from a  different decomposition associated to a different set $\{P'_\alpha\}$. The output of  an  extended generic QSP \blk (e.g.~QSP)  is then \blk the information OUT  for every $\alpha$,  e.g., \blk a functional of $\langle B_{b_1}(\omega_1) \cdots B_{b_k}(\omega_k)\rangle_\alpha = {\rm Tr} [ B_{b_1}(\omega_1) \cdots B_{b_k}(\omega_k) \rho_\alpha]$. For a restricted quantum sensing protocol (rQSP), the corresponding  extended protocol (erQSP) \blk requires that each of the $\rho_\alpha$ satisfies the conditions of the  rQSP. \blk

Concretely, suppose \blk that at $t=0$  the preparation stage outputs a state $\rho_{SB}(0)$. Since the basis $Q_\alpha$ is known, doing tomography on the system state provides us with the $\{ w_\alpha\}$ via Eq.~(\ref{weights}). Then, one can apply a local, i.e., system-only, operation $\mathcal{R}^{(j)}: Q_\alpha \rightarrow  \sum_{\alpha'} R^{(j)}_{\alpha,\alpha'} Q_{\alpha'}$ before the evolution takes place, as in Sec.~\ref{implic}. In this way, for a choice of control Hamiltonian $H_\beta(t)$ and of observable $O_\gamma$, at the end of the experiment the expectation value
\begin{align}
\nonumber {\rm Tr}[ O_\gamma \rho^{(j)}_{S} (T) \blk] &= \sum_\alpha w_\alpha {\rm Tr}[ O_\gamma U^{(\beta)}(\mathcal{R}^{(j)}(Q_\alpha) \otimes \rho_\alpha) {U^{(\beta)}}^\dagger]\\
\label{expects} &=  \sum_{\alpha,\alpha'} w_\alpha R^{(j)}_{\alpha, \alpha'} E_{Q_{\alpha'}, H_\beta,O_\gamma, \rho_\alpha},
\end{align}
with $\rho^{(j)}_{S} = {\rm tr} [(\R^{(j)} \otimes I_B)(\rho_{SB})] $, can be calculated from measurable quantities.  Notice that for a given $\mathcal{R}^{(j)}$, this is a linear function of the variables $E_{Q_{\alpha'}, H_\beta,O_\gamma, \rho_\alpha}$, with known coefficients $w_\alpha R^{(j)}_{\alpha, \alpha'}$. For fixed $O_\gamma$ and $H_\beta(t)$, it is then possible to construct, using a similar argument to the one in ~\ref{sch}, a suitable set of $\{\mathcal{R}^{(j)}\}$ such that the $\{{\rm Tr}[ O_\gamma \rho^{(j)}_{S}]\}$ form an invertible linear set of equations from which all the $E_{Q_{\alpha'}, H_\beta,O_\gamma, \rho_\alpha}$ can be obtained. Notice that access to arbitrary $\R^{(j)}$ can be achieved via the use of an ancillary system and joint evolution/measurements (see for example Ref.~\cite{CPTPGen}). 

It follows, since  $\{Q_\alpha\}$ is a basis set, that one can calculate $E_{\eta_\mu, H_\beta,O_\gamma, \rho_\alpha}$ for an arbitrary $\eta_\mu$. Therefore, by repeating the above process for the appropriate set of observables and control Hamiltonians, one can effectively \blk apply the QSP \blk to each $\rho_\alpha$ independently, despite never preparing an initial state of the form $\rho_S \otimes \rho_\alpha$ as per the theorem. 
\end{proof}

\blk In terms of noise spectroscopy and the dynamics of open quantum systems in the presence of initial correlations, this implies that it is in principle possible to reconstruct $\{\langle B_{b_1}(\omega_1) \cdots  B_{b_k}(\omega_k)\rangle_\alpha \}$ for every $\rho_\alpha$.  Thus it is possible,  given an arbitrary initial state, to calculate each of the $\phi^{(\alpha)}_t(\cdot)$ ruling the dynamics, from measurable quantities. \blk Moreover, we highlight that, in virtue of Theorem~\ref{Th1},  characterizing the correlators ruling the dynamics of a state $\rho_{SB}$, and thus the corresponding set of CPTP maps, implies that one has direct access to the correlators describing the dynamics of any state related  $\rho_{SB}$  by a system-only operation. \blk

\section{Application: practical retrodiction of system dynamics}
\blk 
\label{retrosec}

Interestingly, being able to solve the dynamics of an arbitrary initial state makes meaningful the question of {\it retrodicting} the dynamics of a system (see also Sec.~\ref{predret}). \blk That is, imagine that at time $T_0\leq 0$ a preparation procedure \blk is executed on the system and bath, \blk resulting in a state $\rho_{SB}(T_0)$.  Then the system and bath evolve under a Hamiltonian $H(t)= H_S + H_{SB} + H_B$ from $t=T_0$ to $t=0$, at which point we are given access and are allowed to apply controls and to measure the system. We assume at this point that $H_S$ is known. As discussed above, we can use tomography and noise spectroscopy tools to learn about {the leading bath correlators within a time interval $t \in [0,T]$}, which will eventually lead to the capacity to predict and control its expected dynamics within the interval with high accuracy.  However, can we specify the state of a system at times $t=T_{-} \leq 0$  and,  in particular, the state resulting from the measurement or preparation procedure at time $T_0$? That is what we mean by retrodiction. 
	
{ The possiblity of such retrodiction has been previously considered using a master equation approach, for factorisable states at $t=0$, under a strong Markovian assumption~\cite{Retro1}.  Here we show how retrodiction may be achieved under much weaker conditions, and in particular without assumptions of the system and bath state being factorisable at any time or of Markovianity. }

To proceed, note first that the \blk retrodicted state $\rho_S(T_-)$ that we seek,  i.e., the system density operator we would have measured if we had done tomography at time $t=T_-\leq 0$, is given by
\beq
\label{gendynamretro}
\rho_S(T_-) = {\rm Tr}_B [U(T_{-}) \rho_{SB}(0) U^\dagger(T_{-})] = \sum_\alpha  p_{\alpha} \phi^{(\alpha)}_{T_{-}} ( Q_\alpha),
\eneq
where $\rho_{SB}(0)=\sum_\alpha p_\alpha Q_\alpha\otimes \rho_\alpha$ is any B+ decomposition of $\rho_{SB}(0)$ and
\beq
\label{indCPTPretro}
\phi_{T_-}^{(\alpha)}(\cdot):= {\rm Tr}_B[ U(T_-)(\cdot \otimes \rho_\alpha)U(T_-)^\dagger] ,
\eneq
and the backward time evolution operator $U(T_-):=\big(\mathcal{T} e^{-i \int_{T_-}^{0} H(s) ds}\big)^\dagger$ 
is the inverse of the forward time evolution operator that takes the system and bath from $T_-$ to $0$.
Note that $U(T_-)$ is unitary by definition, implying that $\phi_{T_{-}}^{(\alpha)}$ is a CPTP map for $T_-<0$, similarly to the case of forward time evolution. Thus Eqs.~(\ref{gendynamretro}) and~(\ref{indCPTPretro}) formally extend Eqs.~(\ref{gendynam}) and~(\ref{indCPTP}) to backward time evolution.  

Now, just as for the case of {forward evolution in Sec.~\ref{QSense} (see also~\cite{Dyson, Kubo, ME3, Spectro8}), the map $\phi_{T_{-}}^{(\alpha)}$ is a functional of the correlation functions $\langle B_{b_1}(t_1) \cdots B_{b_k}(t_k) \rangle_\alpha$, but  with $t_i \leq 0$. This can be seen, for example, by expanding Eq.~(\ref{indCPTPretro}) using the Dyson series, as is later done explicitly in Sec.~\ref{pertur}.} In principle, directly accessing these correlation functions requires measuring at times $t< 0$, which is forbidden in our scenario. However, this can be overcome if knowledge about $\langle B_{b_1}(t_1) \cdots B_{b_k}(t_k) \rangle_\alpha$  with $t_i \geq 0$ is available and certain mild conditions are satisfied. Let us see how in more detail. 

As described in Section~\ref{QSense}, knowlege about $\langle B_{b_1}(t_1) \cdots B_{b_k}(t_k) \rangle_\alpha$, for $t_i\in [0,T]$, \blk can be accesed via noise spectroscopy protocols that use control and measurements  in this interval. \blk Mathematically, these protocols lead to estimates, $\tilde{f}\blk^+_{\vec{b},\alpha} (\omega_1,\cdots, \omega_k)$, of the Fourier transforms $\langle B_{b_1}(\omega_1) \cdots  B_{b_k}(\omega_k)\rangle_\alpha$ of the correlation functions. Thus, the inverse Fourier transform \blk 
\beq \label{barecond}
{f}^+_{\vec{b},\alpha} (t_1,\cdots, t_k) \equiv \mathcal{F}^{-1}_{\vec{\omega},\vec{t}}\, (\tilde{f}^+_{\vec{b},\alpha} (\omega_1,\cdots, \omega_k))
\eneq reliably \blk estimates  $\langle B_{b_1}(t_1) \cdots B_{b_k}(t_k)\rangle_\alpha$ for times $0<t_i<T$. \blk

The crux of the matter is that knowledge of the correlators of $B_b(t)$ for times $t\in [0,T]$, \blk as captured by the estimate $f^+_{\vec{b},\alpha}(t_1,\cdots,t_k)$, does not generally guarantee knowledge of the correlators of $B_b(t)$ for $t<0$. \blk For example, $H_B(t)$ could be very \blk different for negative times, and we would never see the effect of this if we only have access to the system at times $t\geq 0$. {Thus we are interested in the the conditions under which it {\em is} possible to} extrapolate the values of the relevant correlators $\langle B_{b_1}(t_1) \cdots B_{b_k}(t_k)\rangle_\alpha$ to negative times, allowing us to compute the necessary quantities in Eqs.~\eqref{gendynamretro} and~\eqref{indCPTPretro} and thus successfully \blk retrodict the state of the system. 

Mathematically,  this is possible if the correlation functions \blk $\langle B_{b_1} (t_1) \cdots B_{b_k}(t_k)\rangle_\alpha$ are \blk  suffciently smooth functions of the $t_i$, so that the relevant correlator information acquired at positive times can be safely extrapolated \blk to negative times. This abstract condition imposes constraints on how $B_a(t)$ and its correlators can change in time and, as hinted earlier, is generally not satisfied. It is however a relatively mild assumption, requiring only that the relevant frequencies underlying the system evolution at past times fall within the range of frequencies accessed at later times. What is more, it holds in at least two natural settings, which we now describe. \blk

{\it Retrodiction for a quantum bath.---} Consider first a quantum bath whose evolution is ruled by a \blk constant bath Hamiltonian $H_B$ and which couples to the system via the same operators $\{B_b\}$ over the range $[T_0,T]$ (a very natural condition). In this case, one has that $B_b(t) = e^{i H_B t}B_b e^{-i H_B t}\,\,\, \forall \,\,t \,\,\in [T_0,T]$, which is typically  a smooth operator function of $t$. {It} is always smooth, with lower and upper frequency cutoffs, in the case of a finite-dimensional bath Hilbert space.  Say the estimators  $ \tilde{f}\blk^+_{\vec{b},\alpha} (\omega_1,\cdots, \omega_k)$ obtained via noise spectroscopy accurately sample all relevant frequencies {\blk (within the aforementioned cut-offs) of the correlators $\langle B_{b_1} (t_1) \cdots B_{b_k}(t_k)\rangle_\alpha$ during the interval $[T_0,T]$. Then it follows that ${f}^+_{\vec{b},\alpha} (t_1,\cdots, t_k)$ can be extrapolated to times $t\in[T_0,0]$ via Eq.~\eqref{barecond}, i.e., the smoothness of $B_b(t)$ for $t \in [T_0,T]$ guarantees that ${f}^+_{\vec{b},\alpha} (t_1,\cdots, t_k)$ is also a good estimate for $\langle B_{b_1} (t_1) \cdots B_{b_k}(t_k)\rangle_\alpha$ when $t_i \in [T_0,0]$. } \blk   It would then be possible to retrodict $\rho_S(t)$ for any $t \in [T_0,T]$.  One could also consider using alternative extrapolations of the correlators to negative times, e.g., via a Taylor series expansion instead of a Fourier expansion. \blk

{\it Retrodiction for a classical bath.---} In this case $B_b(t)$ corresponds to a classical stochastic process, i.e., a classical bath. { This differs from the above in the sense that it cannot be simulated by a constant bounded Hamiltonian.} Retrodiction is still possible, however, if one demands that the stochastic process is stationary.
\blk To show this, note first that any set of times $t_1,t_2,\dots,t_k\in[0,T]$ can always be trivially relabelled, via some permutation $P$, by $s_j:=t_{P(j)}$, such that $0\leq s_1\leq s_2\leq \dots\leq s_k\leq T$. Further, recalling that the noise is classical, one has $[B_{b}(t), B_{b'}(t')] =0\,\,\, \forall \,\, b,b',t,t'$, and hence  $\langle B_{b_1}(t_1) \cdots B_{b_k}(t_k) \rangle_\alpha=\langle B_{b_{P(1)}}(s_1) \cdots B_{b_{P(k)}}(t_k) \rangle_\alpha$. Moreover, since the noise is stationary, the correlators can only depend on relative time differences. Hence,  all future correlators, directly accessible via system control and measurement during $[0,T]$, are of the form
\beq \label{corr}
\langle B_{b_1}(t_1) \cdots B_{b_k}(t_k) \rangle_\alpha=
f_{\vec{b},\alpha}(\Delta_{1},\Delta_{2},\cdots,\Delta_{{k-1}}), 
\eneq	
with $\Delta_j:=s_{j+1}-s_j\geq 0$.	

Now, in contrast, for retrodiction in the interval $[-T,0]$ we require knowledge  of past correlators of the form $\langle B_{b'_1}(t'_1) \cdots B_{b'_k}(t'_k) \rangle_\alpha$ with $t'_j\leq 0$. These times can similarly be reordered, via some permutation $P'$, by $s'_j:=t'_{P'(j)}$, such that $s'_1\leq s'_2\leq\dots\leq s'_k\leq 0$. Stationarity then yields 
\beq \label{retrocorr}
\langle B_{b'_1}(t'_1) \cdots B_{b'_k}(t'_k) \rangle_\alpha = f_{\vec{b}',\alpha}(\Delta'_{1},\Delta'_{2},\cdots,\Delta'_{{k-1}}),
\eneq
with $\Delta'_j:=s'_{j+1}-s'_j\geq 0$. But the right hand side  may be recognised as being equal to a future correlator, as per  Eq.~(\ref{corr}), i.e., all past correlators for the interval $[-T,0]$ can be obtained from the directly accessible future correlators for the interval $[0,T]$. Thus,  we can succesfully retrodict as far as we can succesfully predict, and in particular whenever $|T_0|\leq T$. Note that for the quantum case, where $[B_{b}(t), B_{b'}(t')] \neq 0$ in general, the above argument does not hold.

These results can be generalised in several ways.  On one hand, one can bypass the assumption that the constant $H_S$ before and after $t=0$ has to be known, by considering a sufficiently powerful noise spectroscopy protocol. Noting that $H_{S} + H_{SB} = \sum g_{\alpha} \sigma_\alpha + \sum  \sigma_\alpha \otimes B_\alpha(t) \equiv \sum \sigma_\alpha \otimes \tilde B_\alpha(t)$, it is in principle possible to obtain all the necessary information, i.e., $H_S$ and the leading ${B}_\alpha(t)$ correlators,  from the $\tilde{B}_\alpha(t)$ correlators obtained from noise spectroscopy. On the other hand, when a {\it known} system-only control $H_{\rm ctrl}^{\rm (past)} (t)$ has been applied at $t<0$ our retrodictive power remains unchanged, since the correlators containing information about how the bath couples to the system remain unchanged. 

In summary, retrodiction of the system state $\rho(T_-)$, for $T_-$ in the range $[T_0,0] $, is possible whenever the relevant correlators $\langle B_{b_1} (t_1) \cdots B_{b_k}(t_k)\rangle_\alpha$ can be obtained for $t_i$ in this range. This is possible for sufficiently time-homogeneous quantum baths and for stationary classical noise processes.  It would be of interest to test this approach experimentally, via the tools of noise spectroscopy. For example, one could attempt to retrodict a {\it factorisable} initial system state prepared at time $T_0<0$ from measurements made during a time interval $[0,T]$. {Another avenue which could be explored is to extend our protocols to the prediction of the dynamics for $t>T$, when measurements are only available in $t \in [0,T]$. }\blk

\section{Application: Limited-access tomography}
\label{LAT}
\blk
We finish \blk our exposition by introducing a  one-sided or \blk {\it limited-access} tomography (LAT) protocol, in which information about a joint state is recovered by measuring only one of its subsystems, i.e., the ``probe'' subsystem, and by exploiting the information provided by its evolution under a known and partially controllable Hamiltonian, {\blk and the power of the $B+$ decomposition}. It is a quantum sensing protocol that falls in the generic category in Definition~\ref{defsp}, \blk i.e., when applied to factorisable initial states it \blk does not require any particular structure for the bath state in order to be successful. {\blk The expert reader will recognize that it is related to existing protocols (see for example Refs.~\cite{qLAT1, qLAT2}), but is strictly more powerful as it requires fewer assumptions to be executed while providing more information. We further note that, subsequent to the appearance of the current paper in the arXiv, Liu, Tian, Berttholz and Cai suggested a protocol (now published \cite{qLATPRL}) similar in spirit to ours, but that is  restricted to the initially factorizable scenario.} Thus, our LAT protocol is of strong interest in its own right and  \blk a good platform to demonstrate the value of \blk some of the tools we have \blk developed and discussed in the previous sections. 

We start by describing the general setting of our problem, namely a collection of qubits, evolving under a sufficiently rich Hamiltonian, {such that} we can only measure one of \blk them, i.e., the probe subsystem. We  show  that measuring the probe qubit at different times can yield information not only about the initial state of the probe, $\rho_S$, but also about the  joint \blk initial state, $\rho_{SB}$, where the remaining qubits  play the role of a \blk bath or environment relative  to the probe's evolution.  We describe how to do this by  first  introducing a LAT protocol  for initially uncorrelated states $\rho_{SB}=\rho_S\otimes\rho_B$ (to determine the unknown initial bath state $\rho_B$), \blk and then extend this protocol to arbitrary initial states $\rho_{SB}$ using the {methods} developed earlier in the paper.   We further show how control on the probe qubit can be used to ensure that even in potentially pathological situations, e.g., when the probe qubit couples very weakly to some of the a subsystems in the bath but not to the rest, the quality of our estimation of  $\rho_{SB}$ is not significantly affected. \blk  We conclude by presenting the results of numerically simulating the protocol in a physically relevant model. \blk

\subsection{The basic setup}

We consider a first \blk ``probe'' qubit interacting with a second \blk  $d$-dimensional bath (composed of one or many quantum systems), via a generic {\it known} Hamiltonian of the form 
\begin{equation}
\label{Hnat}
H_{\rm nat} = \sum_{\substack {a=0,x,y,z\\ b=0,\cdots,d^2-1}} g_{a,b} \,  \sigma_a \otimes W_b  =  H_S + H_{SB} +H_B.
\end{equation}
Here \blk $\{W_b\}$ is a basis for the linear operators acting on the \blk bath Hilbert space $\mathcal{H}_B$ (with \blk $W_0  =\bm 1_B$), and $H_{\rm nat}$ denotes the uncontrolled natural {Hamiltonian} of the probe and its bath. During a given experiment, we will allow the possibility of (i) adding fast control on the probe qubit, i.e., $H_{\textrm{ctrl},1} (t) = \sum h_a (t) \sigma_a$, and (ii) performing a tomographically complete set of measurements on the probe. In this way, the Hamiltonian in the interaction picture with respect to $H_{\textrm{ctrl},1} (t) + H_B$ can be rewritten as 
\begin{equation}
\label{Hamil}
\tilde{H}(t)= \sum_{a,b}  y_{a,b}(t) \sigma_b \otimes B_a(t),
\end{equation}
where $B_a = \sum_{b} g_{a,b} W_b,$ and 
\begin{align*}
B_a(t) &= U_{\textrm{frame}}(t)^\dagger B_a  U_{\textrm{frame}}(t) \\
y_{a,b(t)} &= \tr{U_{\textrm{frame}}(t)^\dagger \sigma_a  U_{\textrm{frame}}(t) \sigma_b}/2,
\end{align*}
with $U_{\textrm{frame}}(t) = \mathcal{T} [e^{-i \int_{0}^{t} ds (H_{\textrm{ctrl},1} ({s}) + H_B)}]$. The above implies that, in the Fourier domain, one can then write
$$
B_a(\omega) = \sum_{b,c} g_{a,b}  h_{b,c}(\omega, \vec{g}_{0,\cdot})  W_{c} .
$$
{where $\vec{g}_{0,\cdot} = \{ g_{0,1}, \cdots g_{0,d^2-1}\}$.} The function $h_{b,c}(\omega, \vec{g}_{0,\cdot})$ can be calculated directly from the above equations and, crucially, can be written as 
$$ h_{b,c}^{(r)}(\omega, \vec{g}_{0,\cdot}) =\sum_{s=\pm 1} C^{(b,c)}_{s} (\vec{g}_{0,\cdot})\, \delta (\omega +  s \Omega_r ),$$ 
where $\Omega_r$ is an effective resonance frequency that can be exactly or numerically calculated and $C^{(b,c)}_{s} (\vec{g}_{0,\cdot})$ is a computable coefficient. We say that the operator $B_a(\omega)$ is resonant at frequencies $\{ s \Omega _r \}$. In the case of a single-qubit bath, for example, the resonance frequency takes the form $\Omega_r\equiv 2\sqrt{ \sum_{l=x,y,z}  (g_{0,l})^2}$.   

\subsection{Factorisable initial states}

In order to facilitate the presentation, we start our discussion of the protocol by first considering the standard factorisable state scenario. We assume that at time $t=0$, where we start the analysis of our problem or we are given control, the system and bath are in a state of the form $\rho= \rho_S \otimes \rho_B$, with $\rho_B$ arbitrary, perhaps as a result of an appropriate preparation operation, e.g., a projective measurement on the system, performed at $t=0$. 

The objective of the LAT protocol in this scenario is to estimate the state $\rho_B$ \blk of the \blk  ``bath'' system,  by measuring only the ``probe'' qubit, in the presence of a known interaction (given by Eq.~\eqref{Hnat}). \blk Information about \blk the ``probe'' qubit state is assumed to be available via standard tomography at $t=0$. Following the notation of the previous sections, operationally the GSP is defined by the set of initial states $ \{\eta_\mu\}\blk =\{ \frac{\sigma_0 \pm \sigma_x}{2}, \frac{\sigma_0 \pm \sigma_y}{2},\frac{\sigma_0 \pm \sigma_z}{2}\}$, the observables $ \{O_\gamma\}\blk =\{\sigma_x,\sigma_y,\sigma_z\}$, and a set of control Hamiltonians $\{H_\beta\}$, which we will describe in more detail later.

To see how the protocol works and to understand the role control plays, it will be convenient to work in the interaction picture with respect to the control Hamiltonian and the purely bath Hamiltonian, as described earlier. The expectation values of interest can be written as 
\begin{align}
\label{Exact}
\nonumber E_{\eta_\mu,O_\gamma, H_\beta, \rho_B} &= \tr{U^{(\beta)} (\eta_\mu \otimes \rho_B) {U^{(\beta)}}^\dagger O_\gamma }\\
\nonumber &= {\rm Tr}_S [ {\rm Tr}_B [O_\gamma {U^{(\beta)}}^\dagger O_\gamma U^{(\beta)} \rho_B]  \eta_\mu O_\gamma]\\
&= \sum_{a,b}  V^{\gamma,\beta}_{a,b}   {\rm Tr}[  W_b \rho_B]  {\rm Tr} [ \sigma_a \eta_\mu O_\gamma]
\end{align}
where the coefficients $$V^{\gamma,\beta}_{a,b}= \tr{O_\gamma {U^{(\beta)}}^\dagger O_\gamma U^{(\beta)} (\sigma_a \otimes W_b)}/{2d}$$ and $ {\rm Tr} [ \sigma_a \eta_\mu O_\gamma]$ can be, in principle, exactly or numerically calculated for any choice of the control knobs $\eta_\mu,O_\gamma, H_\beta$. Notice that, for any such choice, the expression for $E_{\eta_\mu,O_\gamma, H_\beta, \rho_B}$ is simply an equation with known coefficients and unknown variables $\{ {\rm Tr}[ W_b \rho_B]\}.$ This is the working principle behind the protocol: by cycling over an appropriately chosen set of control knobs one can generate a solvable linear system of equations from which the ${\rm Tr}[ W_b \rho_B]$, with non-zero coefficient $K_b= \sum_{a}  V^{\gamma,\beta}_{a,b}  {\rm Tr} [ \sigma_a \eta_\mu O_\gamma]$ for at least one choice of $O_\gamma, \eta_\mu$ and $H_\beta$, can be extracted. 

For a given $H_{\rm nat}$ it may not be possible to find a set of control parameters such that all the $\tr{W_b \rho_B}$ are represented, e.g., a pathological Hamiltonian of the form $H_{\rm nat} = g \sigma_z \otimes \sigma_z + J \sigma_z$ . A sufficient condition to guarantee, {up to a change of basis}, that all the expectation values are represented in Eq.~\eqref{Exact} is that the Lie algebra of $\{B_b\}$, i.e., $\{B_{b_1}, [B_{b_1},B_{b_2}],[[B_{b_1},B_{b_2}],B_{b_3}], \cdots\}$, spans the whole operator basis for $\mathcal{H}_B$. Henceforth we assume we are dealing with a $H_{\rm nat}$ in this category. A physically relevant example of such a Hamiltonian for a composite system of $N$ qubits is \blk 
\beq 
\label{Hspec}
H_{\rm nat}^{(N)} \blk = \sum_{\substack {a=x,y,z\\ i,j=1,\cdots,N}} g^{(i,j)}_{a,a} \,  \sigma^{(i)}_a \otimes \sigma^{(j)}_a  + \sum_{ i=1,\cdots,N}  J_i \sigma^{(i)}_z,
\eneq
where the $g$'s and $J$'s are generic, i.e., they have no symmetries that would make $H_{\rm nat}$ unitary equivalent to a pathological Hamiltonian analog to the one described earlier. Here \blk $\{\sigma^{(i)}_b\}$ is the $b$-th Pauli matrix acting on the $i$-th qubit, and qubit 1 is identifed with \blk the probe.  This type of Hamiltonian, corresponding to a set of qubits coupled via a dipole-dipole interaction in the presence of a magnetic field, is ubiquitous in physics and, as noted in the Introduction, is particularly relevant to NV centers~\cite{NVcenter} and NMR~\cite{NMR}. \blk

\subsubsection{The role of control}

Control allows us to achieve the  desired \blk goal by: (i) ensuring that the system of equations includes all the variables of interest, (ii) providing a mechanism to generate sufficiently many equations to guarantee the solvability of the system of equations, and (iii) giving us the ability to build a well-conditioned system. 

In order to show how this is done, we turn to a perturbative analysis. While this path is not strictly necessary if we restrict ourselves to piecewise control (for which all unitaries can be exactly calculated), it is convenient to work in this language for generality, i.e., when an exact, closed-form expression for the expection values in the presence of time-dependent control cannot be obtained, and to facilitate some of the arguments. Moreover, we choose it because our long-term plan is to integrate the LAT protocol within a suite of system characterisation tools, which includes noise spectroscopy, that generally require perturbative expansions and the so-called filter function formalism~\cite{FFF, Spectro1, Spectro3, Spectro6, Spectro7}. 

\subsubsection{A perturbative expansion interlude} 
\label{pertur}
Using an adequate perturbative expansion (here we use the well-known Dyson series~\cite{Dyson}, but cumulant-like expansions are also possible~\cite{Kubo, ME3, Spectro8}), we can write the relevant expectation values (see Sec.\ref{QSense}) as \blk
\begin{align}
E_{\eta_\mu,O_\gamma, H_\beta, \rho_B}  &= {\rm Tr_S } \left[ \left(1+ \mathcal{D}^{(\beta)}_{1}(\rho_B,T) \right. \right.\nonumber\\
\label{Dys}
&\qquad\qquad + \left. \left. \mathcal{D}^{(\beta)}_{2}(\rho_B, T) +\cdots \right)  \eta_\mu O _\gamma \right],
\end{align}
where the Dyson terms are defined in the usual way~\cite{OpenQuantumSystemsBook2}, but with respect to the redefined Hamiltonian~\cite{Spectro8}
$$
H'(t) = \begin{cases}
-O_\gamma \tilde{H}(T-s) O_\gamma & {\rm for~~ } T\geq  t > 0\\
\,\,\,\,\,\,\,\,\,\,\,\,\tilde{H}(s+T) & {\rm for~~ } 0 \geq t \geq -T
\end{cases}~ 
$$
 with \blk $\tilde{H}(t)$ given by Eq.~\eqref{Hamil}. In this way one finds, for example, that   
\begin{align*}
\mathcal{D}^{(\beta)}_1 & = - i g  \int_0^T dt\,\, \langle \tilde{H}(t) - O_\gamma \tilde{H}(t) O_\gamma \rangle \\
\mathcal{D}^{(\beta)}_2 & = - g^2 \!\int_0^T \!\!\!dt_1 \int_{0}^{t_1} \!\!\!dt_2  \langle \blk O_\gamma \tilde{H}(t_2) \tilde{H}(t_1) O_\gamma +\tilde{H}(t_1) \tilde{H}(t_2) \\
&\,\,\,\,\,\,  - O_\gamma \tilde{H}(t_1) O_\gamma \tilde{H}(t_2)- O_\gamma \tilde{H}(t_2) O_\gamma \tilde{H}(t_1)    \rangle . \blk 
\end{align*}

Moving to the frequency domain, the Dyson terms can be rewritten in terms of the purely control dependent filter functions 
$$
F_{\vec{a},\vec{b}}^{(k)} (\omega_1, \cdots, \omega_k,t) = \int_{0}^{t} \!\! d_> \vec{s}_{[k]} \,\prod_{j=1}^k y_{a_j,b_j}(s_j) e^{i s_j \omega_j},
$$
where we have used $\int_{0}^{t} d_>\vec{s}_{[k]}$ to denote the ordered integration, i.e., $s_1 \geq s_2 \geq \cdots \geq s_k$, and the Fourier transforms of the moments
$$
M^{(k,r)}_{\vec{a}}(\vec{\omega}) \equiv  \langle B^{(r)}_{a_1}(\omega_1) \cdots B^{(r)}_{a_k}(\omega_k)\rangle,
$$
such that, for example, 
$$
\mathcal{D}^{(\beta)}_1  = \frac{-i g}{2 \pi}   \int_{-\infty}^\infty \!\!\!\!\!\! d\omega   \sum_a   F_{a,b}^{(1)}(\omega,T) (\sigma_a - O_\gamma \sigma_a O_\gamma) \, M^{(1,r)}_a(\omega). 
$$	
More generally, each Dyson term will be typically written as a linear combination of convolutions of the form 
\beq \label{ik}
 I_k = \int d\vec{\omega} F^{(k)}_{\vec{a},\vec{b}} (\vec{\omega},T) M^{(k,r)}_{\vec{a}}(\vec{\omega}). 
 \eneq 
 This is typical of noise spectroscopy protocols and, indeed, finding ways to reliably deconvolute such integrals is one of the main roadblock when designing them for general baths. This is where the knowledge of the structure of the bath of our problem, i.e., its finite dimensional character, becomes important. The key thing to observe is that when working in the Fourier domain, each moment can be written as 
\begin{align} 
\nonumber M^{(k,r)}_{\vec{a}}(\vec{\omega}) &= \sum_u \sum_{\vec{a},\vec{b},\vec{c}} \Big(  \prod_j  g_{a_j,b_j} C^{(s_j)}_{b_j,c_j} (\vec{g}) \delta(\omega_j + s_j \Omega_r)\Big) \\
& \,\,\,\,\,\,\,\,\, \times  \frac{\tr{ \sigma_{c_1} \cdots \sigma_{c_k} \sigma_u}}{2} \langle W_u \rangle.
\end{align}

\subsubsection{Control as a tool to generate a well-conditioned system}

Having briefly introduced the relevant details about the perturbative expansion, we are now ready to show that, aided by control, we can extract  the desired information from measurements on the probe qubit only. 

The first role that control plays is to allow us to generate multiple linearly independent equations, which ultimately allow us to build a solvable linear system via Eq.~\eqref{Exact}. From the perturbative point of view, fast control on the probe means that the filter functions can be changed or, equivalently, that the coefficients $V_{a,b}^{\gamma,\beta}$ in Eq.~\eqref{Exact} can be further manipulated while keeping $O_\gamma$ and $\eta_\nu$ constant. This is enough then to be able to, at least in principle, extract the desired $\{\tr{W_b \rho_B}\}$. In practice, however, where operations are imperfect it may not be enough: one has to also ensure that the system of equations is as well conditioned as possible.

To illustrate why a well-conditioned system is important, i.e., that the inferred values of $\langle W_b \rangle$  are robust to fluctuations in $E_{\rho,O_\gamma, H_\beta, \rho_B}$, consider a simple Hamiltonian of the form 
$$
\tilde{H}(t) =  \sum_{j=1,2} y_{3,j}(t)  g_j \sigma_z \otimes  B_j(t), 
$$
with $B_1(t) = \sigma_z$ and $B_2(t) =  \cos[\Omega t ] \, \sigma_x + \sin[\Omega t ] \,  \sigma_y$, i.e., $B_1(\omega)$ is resonant at $\omega=0$ while $B_2(\omega)$ is resonant with $\omega = \pm \Omega$, and $y_{3,j} (t)$ is the switching function resulting from applying a sequence of $\sigma_x$ pulses. The problem stems from the fact that if, for example $g_1  \gg \blk g_2$, under free evolution the effect on the probe of the term proportional to $g_{2}$ is negligible compared to the other one. In broad terms, this implies that the values of $\tr{\sigma_x \rho_B}$ or $\tr{\sigma_y \rho_B}$ can only be accurately recovered if the fluctuations in $E_{\eta_\mu,O_\gamma, H_\beta, \rho_B}$, due to experimental errors for example, are much smaller than the typical value of terms involving $\tr{\sigma_x \rho_B}$. This is not ideal as it may lead to bad estimates. Control can be used to combat this problem. 

The idea is to use a control sequence that is ``resonant'' with a bath operator in order to enhance its contribution. In order to see how this work it is convenient to gain insight from the perturbative approach based around the filter function formalism. Since $g_{1}  \gg \blk g_{2}$ and $F^{(1)}_{3,1} (\omega,T) =F^{(1)}_{3,2} (\omega,T)$, one expects that $\| g_{1} \int d\omega  F^{(1)}_{3,1} (\omega,T)  \tr{B_1(\omega) \rho_B}\|   \gg \blk  \| g_{2} \int d\omega  F^{(1)}_{3,2} (\omega,T)  \tr{B_2(\omega) \rho_B}\|$, i.e.,  that \blk the term proportional to $g_1$ is dominant. That is, unless the filters have a specific structure, the contribution of $\tr{ \sigma_x \rho_B}$ and $\tr{ \sigma_y \rho_B}$ to the probe dynamics will be overshadowed by the $\tr{ \sigma_z \rho_B}$ contribution. However, by noticing that $B_1(\omega) \propto \delta (\omega)$ and $B_2(\omega) \propto \delta (\omega \pm \Omega)$,i.e., the different bath operator have different resonance frequencies, we can drastically modify this situation. Indeed, if one chooses a decoupling sequence of cancellation order $\delta \neq 0$, then one has that $F^{(1)}_{3,1} (\omega=0,T) = 0$ and that, in general,  $F^{(1)}_{3,1} (\omega= \Omega,T) \neq 0$, i.e., the generally dominant contribution proportional to $g_{1}$ can be suppressed. But one can take this a step further, one can simultaneously ensure that the contribution of the term proportional to $g_{2}$ is large. By using a decoupling sequence composed of basic sequence of length $T_c$ repeated $M  \gg 1$ \blk  times, one gets that  
\begin{align}
\label{pref}& \Vert g_{2} \int d\omega  F^{(1)}_{2} (\omega, M T_c)  \tr{B_2(\omega) \rho_B}\Vert \\
\nonumber &= \Vert g_{2} \int d\omega  \frac{ 1 - e^{i M \omega T_c}}{1 - e^{i \omega T_c}} F^{(1)}_{3,2} (\omega,T_c)  \tr{B_2(\omega) \rho_B}\Vert.
\end{align}
One can verify that the $\frac{ 1 - e^{i M \omega T_c}}{1 - e^{i \omega T_c}}$ factor plays the role of a window function (both its real and imaginary part) that grows with $M$ around $\omega = k 2 \pi/T_c$, for $k=0,1,\cdots$ while suppressing other frequencies. The width of this window decreases with $M$ while its height grows with $M$. Thus, by repeating a dynamical decoupling sequence with $T_c = 2 \pi / \Omega$ we can enhance the contributions of bath operators which are resonant with the control sequence, i.e.,  $\omega = k 2 \pi/T_c = k \Omega$, while suppressing the rest. In other words, for an appropriate choice of control one can then make the contribution proportional $g_{2}$ term be the dominant one. Thus, by adding the equations to the ones using free evolution, for example, one can avoid the ill-conditioned system that would result from the described Hamiltonian.

The situation can be more complicated for a general Hamiltonian, e.g., one may need to hit multiple resonances, but the same principle applies: by using a control sequence that matches the resonance frequency of the different bath operators (induced by the bath only part of the Hamiltonian) can suppress or enhance the relative effect of different Hamiltonian terms on the probe dynamics.

\subsection{Extension to arbitrary initial states} 

If the initial state of system plus bath $\rho_{SB}$ is   allowed to be \blk correlated, then we can use the recipe  for extending a QSP \blk  in Theorem~\ref{thm2} of \blk Sec.\ref{QSense}, to generalise the above limited-access tomography protocol  to determine $\rho_{SB}$, via local operations and measurements on the system only. \blk 

As before, we first notice that, by using a B+ decomposition for the initially unknown state $\rho_{SB}= \sum w_\alpha Q_\alpha \otimes \rho_\alpha$, experimentally we have access to $\sum_\alpha w_\alpha E_{Q_\alpha ,\sigma_z, H_\beta, \eta_\alpha}$. As discussed  in the proof of Theorem~\ref{thm2}, \blk by using our freedom to apply an operation at time $t=0$ we can generate a system of equations from which the individual $w_\alpha$ and $E_{Q_\alpha ,\sigma_z, H_\beta, \rho_\alpha}$ can be obtained.   An example of   a suitable \blk set of local operations, when the probe susbsytem corresponds to a single qubit, is the following~\cite{modiexp}. Consider the  probe  states $\{ \ket{{s_a,\sigma_a}} \}$, for $s_a = \pm$ and $a= x,y,z$, where $\ket{{\pm,\sigma_a}}$ is the $\pm$ eigenstate of the $\sigma_a$ operator. Then, the set of CPTP maps $\{\mathcal{R}_{ s_a, \sigma_a; s_b, \sigma_b}\}$, with $\mathcal{R}_{ s_a, \sigma_a; s_b, \sigma_b}$ corresponding to \blk a projection  onto \blk $\ket{{s_a,\sigma_a}} \bra{{s_a,\sigma_a}}$ followed by the unitary rotation that takes $\ket{{s_a,\sigma_a}}$ to $\ket{{s_b,\sigma_b}}$, is enough to allow the recovery of the $E_{Q_{\alpha'} ,\sigma_z, H_\beta, \rho_\alpha}$.  \blk

Because the $Q_\alpha$  in the B+ decomposition \blk form a basis  set, \blk one can then calculate 
$$
E_{\eta_\mu,\sigma_z, H_\beta, \rho_\alpha} = \sum_{\alpha'} \tr{\eta_\mu P_{\alpha'}} E_{Q_{\alpha'} ,\sigma_z, H_\beta, \rho_\alpha},
$$
for any $\eta_\mu$. From this, as in the initially factorisable case, we can obtain the value for the quantities analogue to the $\{ I_k\}$  in Eq.~\ref{ik}, \blk but with ${\rho_B \rightarrow \rho_\alpha}$, i.e.,  with the \blk moments $\tr{B_{a_1}(\omega_1) \cdots B_{a_k}(\omega_k)\rho_B }$ replaced by $\tr{B_{a_1}(\omega_1) \cdots B_{a_k}(\omega_k)\rho_\alpha }$. In turn, this implies that by using our ability to control the probe system we can access the $\tr{\sigma_a \rho_\alpha}$, for all $a$ and $\alpha$. Finally, in order to reconstruct the density matrix at $t=0$, we note that 
\begin{align*}
{\rm Tr} [(\sigma_a \otimes \sigma_{a'}) \rho_{SB}(t=0)]  &= \sum_{\alpha } w_\alpha {\rm Tr} [\sigma_a Q_\alpha ] {\rm Tr} [\sigma_{a'} \rho_\alpha].
\end{align*}
Since the ${\rm Tr} [\sigma_a Q_\alpha ]$ can be calculated, and the described protocol gives us access to the $w_\alpha$ and the $\tr{\sigma_{a'} \rho_\alpha}$, we have all the information necessary to do tomography in the initially correlated joint state, as claimed.

\subsection{Illustrative example}

In order to illustrate the protocol described in this section and deploy all the tools discussed, we consider the scenario of a probe qubit ($i=1$) coupled to two bath qubits ($i=2,3$) via a Hamiltonian of the form of Eq.~(\ref{Hspec}): \blk
\beq
\label{Hspec1p2}
H_{\rm nat}^{(3)} \blk = \sum_{\substack {a=x,y,z\\ i \neq j=1,2,3}} g^{(i,j)}_{a,a} \,  \sigma^{(i)}_a \otimes \sigma^{(j)}_a  + \sum_{i=1,2,3}  J_i \sigma^{(i)}_z,
\eneq
working in units where $J_1 = 0, J_2= 1, J_3=3$, $g^{(1,2)}_{a,a} = 1= g^{(2,3)}_{a,a}$, and $g^{(1,3)}_{a,a}=1/100$, i.e., the probe qubit couples much more strongly to one of the ``bath'' qubits than to the other. { We consider, as an example, the situation where at $t=0$ the three qubit system is initialised in state $\ket{\Psi}= \frac{\ket{001} + \ket{010} +\ket{100}}{\sqrt{3}}$ and the preparation procedure consists of a $\sigma_z$ measurement made on the probe qubit, yielding the state $\rho_{SB} = \frac{2}{3} \ket{0}\bra{0} \otimes \ket{\phi_{+}}\bra{\phi_{+}} + \frac{1}{3} \ket{1}\bra{1} \otimes \ket{00} \bra{00},$ with $\ket{\phi_{+}} = \frac{ \ket{01} + \ket{10}}{\sqrt{2}}$.} The state is then allowed to evolve until $t=T_1$, at which point one is given control over the probe qubit and can execute a LAT protocol. 

The ingredients of the protocol are as follows. As local maps at time $t=T_1$, we choose the $\{\mathcal{R}_{ s_a, \sigma_a; s_b, \sigma_b}\}$
described in the previous subsection. As measurements on the probe qubit we choose the complete set of Pauli operators $\{\sigma_a\}$, for $a=x,y,z$. Finally, as fast control on the probe qubit, we choose concatenated dynamical decoupling (DD) sequences of cancellation order $\delta=2$, i.e., CPMG sequences~\cite{CPMG1,CPMG2}, of cycle time $T_c$ repeated $M$ times. The cycle times are chosen to enhance/suppress the different bath operators and thus we build sequences that match the resonance frequencies given by 
\begin{align*}
\{\Omega_r\} &= \{\Big( \sqrt{(g_{x,x} + g_{y,y} + g_{z,z})^2 + (J_2 - J_3)^2}\\
& \,\,\,\,\,\,\,\,\, \pm \sqrt{(g_{x,x} - g_{y,y} + g_{z,z})^2 + (J_2 + J_3)^2} \Big)\\
& \simeq \{7.30318, 1.63217\}.
\end{align*}
To further suppress the contribution of different terms in the Hamiltonian in a given experiment, we also cycle over sequences that use different types of pulses, i.e., $\sigma_x, \sigma_y,\sigma_z$. { Note that a sequence composed of $\sigma_\alpha$ pulses suppresses the contribution of terms that anticummmute with it, so by cycling over the different types of sequences we alternatively supress the contribution of some terms in the Hamiltonian relative to others.} Once we have generated the necessary equations, we implement a naive routine to obtain the estimates of $\tr{W_a \rho_B}$. We first use an unconstrained Least Squares algorithm in order to solve the system of equations and obtain estimates $\{ \tr{W_a \rho_\alpha}\vert_{\rm estimate}\}$. To obtain a physical system-bath state, we initially use these estimates to build the operator $\tilde{\rho}_{SB}$, which may not be positive, and finally pick the closest positive operator~\cite{CloseP} $\bar{\rho}_{SB}$ as our true estimate~\footnote{We recognise that the data processing inequality implies this routine is not optimal and other processing options are possible, e.g., as in Ref.~\cite{Spectro12}, but it is not our objective here to optimise the protocol.}. 

For simplicity in the analysis we only introduce additive errors in the final measurements and we do so by each time picking a small correction from a Gaussian distribution of mean $\mu=0$ and variance $\sigma^2= 1/10$ and averaging over $100$ realisations. Under these assumptions, we find that estimation error of the three-qubit state $\rho_{SB}$ induced by the measurement error can be of the same order of the one we would find if we were to to do single qubit tomography on the probe, provided the control is adequately chosen. {In our example, the quality of the control is given by how narrow the window induced by the $ \frac{ 1 - e^{i M \omega T_c}}{1 - e^{i \omega T_c}} $ prefactor in Eq.~\eqref{pref} is, i.e., if the different resonance frequencies were closer to each other one would need to choose a narrower window.}

For example, in a sample run of the protocol, using only sequences composed of $\sigma_x$ pulses repeated $M=10$ times but using only one resonant frequency, we find that the fidelity between the estimated state and the actual $\rho_{SB}(T_1)$ to be $\mathcal{F}_{SB}(T_1)=0.8207$, while the fidelity of the estimated $\rho_S(T_1)$ and the actual state would be $\mathcal{F}_{S}(T_1)=0.9991$ if we were to do just tomography of the probe using the same sort of noisy single qubit measurements. {Notice that it makes sense to compare these two scenarios as they both use single qubit measurements on the probe qubit. It would seem from the above numbers that there is a price to pay for infering the three qubit state from measurements on the probe qubit only. However, as discussed earlier, control can overcome this problem. }

In contrast, when we use both resonant frequencies (still in the $M=10$ regime), we find that $\mathcal{F}_{SB}(T_1)=0.9865$, in agreement with the expected better conditioning of the system induced by control. If we further use the system generated by a full set of decoupling sequences cycling over all axis, using both resonant frequencies, and $M=50$ repetitions, one finds that $\mathcal{F}_{SB}(T_1)=0.9952$. That is, the better designed the control is the better conditioned the resulting system of equations is. It should be pointed out that the type of control, i.e., repetition of a base sequence, we have proposed here is perhaps the simplest way of addressing the issue and is, by no means, unique. Other options, such as using a different set of base sequences and a fixed, but large, number of repetitions as in Refs.~\cite{Spectro6,Spectro7} or even more advanced forms of filter design as in Refs.~\cite{Spectro10,Spectro11} may yield better results, but ultimately the ideal choice will be determined by the control capabilities available to the experimenter. Here, our interest was to show LAT protocols are possible. 

The information obtained from the LAT protocol allows us also to illustrate our point  in Sec.~\ref{retrosec} \blk on the retrodiction of the probe dynamics. In the simple scenario in which the Hamiltonian is fully known, the information about $\rho_{SB} (T_1)$ not only allows us to reconstruct $\rho_S(0)$ but the whole  initial system-bath state \blk $\rho_{SB} (0)$. Notice that, since the fidelity is invariant under unitary operations,  $\mathcal{F}_{SB}(0) =\mathcal{F}_{SB}(T_1)$. It should be pointed out, however, that  this property, of \blk the quality of the estimate of $\rho_S(0)$ being independent of the value $T_1$, is not a feature to be expected in more general quantum sensing protocols, that yield information about only the leading bath correlators and depend on the convergence  of the \blk perturbation expansion. 

\section{Conclusion}

In this paper, we have \blk studied the dynamics of an open quantum system in the general scenario where $\rho_{SB}(0) \neq \rho_S \otimes \rho_B$. By introducing a universal decomposition for an arbitrary state $\rho_{SB}$, we showed that techniques previously developed and well studied within the factorisable initial state context---e.g., mathematical objects (such as CPTP maps), calculational methods to approximate the dynamics of an open quantum system (such as master equations),  definitions of Markovian evolution, \blk and protocols to sense an environment by measuring the response of an open quantum system (such as noise spectroscopy protocols)---can all be seamlessly extended to the general scenario in which initial system-bath correlations are present. \blk Moreover, we fully solved the qubit dephasing model as an indicative example. \blk 

We have  further \blk applied our methods to present \blk a new `Limited Access Tomography' protocol which provides a way of performing tomography on a multipartite \blk system, evolving under a sufficiently rich and known Hamiltonian, via \blk measuring and controlling only a ``probe'' subsystem. Another interesting application of our results is the possibility of retrodicting the dynamics of a quantum system undergoing stationary noise, from measurements performed in the future. That is, we show that if a experimenter receives a state at time  $t=0$, \blk which is potentially entangled with its environment, then it is possible reconstruct the density operator of the system at  an earlier time, \blk e.g., the output system state of a preparation procedure at $t=T_0<0$, by exploiting information gathered from measurements performed at times  $t>0$. 
 
We expect that the results presented here will provide a direct way to extend existing open quantum system tools to the non-factorisable initial system and bath state scenario, and facilitate the generalisation of results in that context. For example, it should be possible to analyse quantum thermondynamics and heat transport problems in the scenario were the system and the reservoir(s) start in a correlated state. Similarly, from the point of view of quantum sensing, we expect that our results will open the way for novel protocols in which the degrees of freedom of interest are encoded in initially correlated system-bath states. For example, consider a qubit lattice of which one can only access a set of sites, which \blk is in the ground state of its Hamiltonian. With the results presented here, it would be possible then to `sense' properties that are encoded in the correlations between the subsystems, e.g., to \blk characterise area-laws~\cite{EisertRMP}. 

In the arena of noise spectroscopy, \blk the tools developed here will allow the characterisation of \blk bath correlations irrespective of initial conditions. These tools also allow significant extensions of \blk other noise characterisation protocols. For example, they \blk in principle allow randomised benchmarking  to be extended beyond the assumption that all noisy unitaries are represented by CPTP maps. Within the context of randomised benchmarking, this assumption has hitherto imposed a very strong constraint on the possible correlations that can be generated between the system and bath during their joint evolution, and thus on the noise mechanism capable of generating such correlations. 

This new capability will become more and more important as theoretical and experimental efforts push towards the higher quality system characterisation~\cite{chara1} needed to achieve significantly-below-threshold fidelities.  

\section{Acknowledgements}

This project received grant funding from the Australian Government via the AUSMURI grant AUSMURI000002. GPS is pleased to acknowledge support from the DECRA fellowship DE170100088. \blk The ARC Centre of Excellence Grant No. CE170100012 (CQC2T) supported the research contributions of GPS and HMW. 

 We thank our anonymous referees for their valuable comments, which greatly improved the exposition of our results, and for pointing our attention to related literature (e.g. Refs. ~\cite{qLAT1, qLAT2, deco1, deco2}). \blk



\appendix
\section{Constructing B+ decompositions}
\label{construct}

The construction of general and canonical B+ decompositions in Sec.~\ref{theory} relies on finding a dual frame or dual basis $\{Q_\alpha\}$ of system operators, for a given frame or basis $\{ P_\alpha\}$ of positive operators.  The general theory of frames on vector spaces is well established~\cite{frames}, and has found applications in quantum tomography (see for example Ref.~\cite{optTomogra} for a comprehensive review). Indeed,  if one has POVMs $\{E^{(1)}_r\},\{E^{(2)}_s\},\dots$, which form a tomographically complete set, then one can take $\{ P_\alpha\}$ to be the concatenated set $\{\dots,E^{(1)}_r,\dots, E^{(2)}_s,\dots\}$. Here we give the details needed for general B+ decompositions, with canonical B+ decompositions discussed in Appendix~\ref{condual}.

As a convenient reference set, let $\{G_j\}$ be some orthonormal basis set of Hermitian operators acting on the Hilbert space $\mathcal{H}_S$ of the system, so that
\beq \label{ortho}
\tr{G_j G_k}=\delta_{jk},~~~~~G_j^\dagger=G_j. 
\eneq
It is convenient to represent this basis by a vector operator, $\bm G$, with the $j$th component given by $G_j$. For the case of a $d$-dimensional Hilbert space $\bm G$ has $d^2$ components.  It follows from Eq.~(\ref{ortho} that any operator $A$ can be written as
\beq \label{ortho2}
A=\sum_j \tr{AG_j}G_j = \tr{A\bm G^\top}\bm G .
\eneq

An arbitrary (and possibly overcomplete) basis set or frame $\{P_\alpha\}$ is similarly be represented by a vector operator $\bm P$, which for the case of a $d$-dimensional Hilbert space has at least $d^2$ components. It can be uniquely expressed in terms of the basis set $\{G_j\}$,  using Eq.~\ref{ortho}), as
\beq \label{pag}
\bm P = \T\,\bm G ,\qquad \T:= \tr{\bm P \bm G^\top},
\eneq
Note that  $\T$ is a real matrix, and will be non-square when $\{P_\alpha\}$ is overcomplete.

To construct a dual frame $\{ Q_\alpha\}$ satisfying Eq.~(\ref{decom}) (i.e., $A = \sum_{\alpha} \tr{ A Q_\alpha} P_\alpha =  \sum_{\alpha} \tr{ A P_\alpha} Q_\alpha$), represented by vector operator $\bm Q$, note first that we similarly must have
\beq \label{qtg}
\bm Q =\tilde \T\,\bm G ,\qquad \tilde\T:=  \tr{\bm Q\bm G^\top}
\eneq
for some real matrix $\tilde\T$ having the same dimensions as $\T$. Substituting this into Eq.~(\ref{decom}), and using Eq.~(\ref{ortho2}), then gives
\[
\tr{A\bm G^\top}\bm G = \tr{A\bm P^\top}\bm Q = \tr{A\bm G^\top} (\T^\top \tilde\T) \bm G.
\]
Since $A$ is arbitrary, and has a unique expansion with respect to $\bm G$, it follows that
$\T^\top \tilde\T$ is equal to the $d^2\times d^2$ identity matrix
(with $d\equiv\infty$ for an infinite-dimensional Hilbert space).  This is easily checked to have the solution
\beq \label{tildet}
\tilde\T = \T\, (\T^\top \T)^{-1} 
\eneq
when the inverse exists (which it typically does, as discussed below). Equation~(\ref{qmp}) of the main text then follows via
\[ \textrm{M}\bf P = \T(\T^\top \T)^{-2}\T^\top\bm P=\T(\T^\top \T)^{-1}\bm G=\tilde\T\bm G = \bm Q. \]
For an overcomplete basis $\{P_\alpha\}$ there is in fact an infinite set of solutions for $\textrm{M}$, each with a corresponding dual frame~\cite{frames}, reflecting the fact that the decomposition of a state in such a basis is not unique. 

Finally, it may be shown the inverse of $\T^\top \T$ always exists for a finite Hilbert space, and also exists for an infinite-dimensional Hilbert space under a mild `frame' condition on $\{P_\alpha\}$.
In particular, noting 
\[ (\T^\top \T)_{jk}= \sum_\alpha \tr{P_\alpha G_j}\,\tr{P_\alpha G_k} = \bm v^{(j)}\cdot \bm v^{(k)}, \]
with $\bm v^{(j)}:=\tr{\bm P\,G_j}$, it follows that $\T^\top \T$ is a Gram matrix with respect to the vectors $\{ \bm v^{(j)}\}$. Hence, this matrix is strictly positive if and only if the $\bm v^{(j)}$ are linearly independent. But linear independence is guaranteed, since $\sum_j c_j\bm v^{(j)}=\tr{\bm P\sum_j c_jG_j}$, and so can vanish only if $c_j=0$ for all $j$ (since the $ P_\alpha$ form a basis and the $G_j$ are linearly independent). Hence, 
\beq \label{pos}
\T^\top \T>0 .
\eneq
It follows immediately that the inverse in Eq.~\ref{tildet} exists for a $d$-dimensional Hilbert space with finite $d$. Further, for an infinite Hilbert space, the inverse exists if the eigenvalues of the Gram matrix are bounded above and below, i.e, if there are finite constants $a,b>0$ such that
\beq a\leq \T^\top \T \leq b. 
\eneq
Whether this condition is satisfied depends on the particular frame $\{P_\alpha\}$, as can be seen explicitly by multiplying on the right by the vector $\tr{A\bm G}$ and on the left by the transpose $\tr{A\bm G^\top}$, and using Eqs.~(\ref{ortho})--(\ref{pag}), to rewrite it as the equivalent `frame condition'~\cite{frames}
\beq a \leq \sum_\alpha \tr{AP_\alpha}^2 \leq b  \eneq
for all Hermitian operators $A$ satisfying $\tr{A^2}=1$.

\section{Constructing canonical B+ decompositions}
\label{condual}

A canonical B+ decomposition requires the $P_\alpha$ to (i)~be linearly independent and (ii)~to sum to the unit operator. Condition~(i) implies the mapping~(\ref{pag}) between the two basis sets\textbf{} is one:one, from which it immediately follows that $\T$ is invertible and $\bm G=T^{-1}\bm P$. It then follows from Eqs.~(\ref{pag}) and~(\ref{qtg}) that there is a unique dual basis, given by
\beq \label{qgen}
\tilde T = (\T^\top)^{-1} ,\qquad \bm Q=\tilde \T\bm G=\tilde \T\T^{-1}\bm P= (\T\T^\top)^{-1}\bm P,
\eneq
as noted in Sec.~\ref{canonbplus}. We remark that the biorthogonality relation~(\ref{biorthog}) can alternatively be written in matrix form, as $\tr{\bm P\bm Q^\top}=I_{d^2}$. It is also worth noting that while the dual basis elements cannot all be positive in general~\cite{Ferrie08}, the second condition implies that they all must have unit trace:
\beq \tr{Q_\beta}=\sum_\alpha \tr{P_\alpha Q_\beta}=\sum_\beta \delta_{\alpha\beta}=1 .
\eneq

Note that condition~(ii), which requires $\{ P_\alpha\}$ to be a POVM, is a relatively minor restriction, as a suitable POVM can be constructed from any linearly independent basis set of positive operators $\{ P'_\alpha\}$ (which may in turn be constructed from a suitable subset of a tomographically complete set of operators).  In particular, for any such linearly independent basis set, define $P':=\sum_\alpha P'_\alpha$.  This operator is not only positive but is strictly positive, i.e., $P'>0$. To see this, suppose there is some state $|\psi\rangle$ such that $P'|\psi\rangle=0$.  Hence $\langle\psi|P'|\psi\rangle= \sum_\alpha \langle\psi|P'_\alpha|\psi\rangle=0$. But each term in the sum is non-negative, implying $\langle\psi|P'_\alpha|\psi\rangle=0$.  Equation~(\ref{decom}) then gives
$$|\psi\rangle\langle\psi|=\sum_\alpha Q'_\alpha \tr{P'_\alpha |\psi\rangle\langle\psi|}=\sum_\alpha Q'_\alpha \langle\psi|P'_\alpha |\psi\rangle  = 0,$$
and thus $|\psi\rangle=0$, i.e., $P>0$ as claimed. Finally, defining 
\beq
P_\alpha := (P')^{-1/2} P'_\alpha (P')^{-1/2}
\eneq
gives an informationally complete POVM $\{ P_\alpha\}$ as required.

Equation~(\ref{qgen}) for the dual basis may be bypassed for the case of symmetric informationally complete POVMs (SIC POVMs)~\cite{sicpovm}, for which the high degree of symmetry allows the $\{Q_\alpha\}$ to be evaluated explicitly. In particular, a POVM $\{P_\alpha\}$ with $d^2$ linearly independent operators is defined to be a general SIC POVM if and only if $\tr{P_\alpha^2}=$ constant and $\tr{P_{\alpha}P_{\beta}}=$ constant, for all $\alpha\neq\beta$~\cite{Gour14}. Defining $a=\tr{P_\alpha^2}$, the corresponding dual basis is then given by
\beq
Q_\alpha = \frac{d}{ad^3-1} \left[ (d^2-1) P_\alpha - (1-ad){\bm 1}_S \right],
\eneq
as may be verified by direct calculation~\cite{Gour14}. This generalises Example~\ref{Ex2} of the main text, which corresponds to the case $d=2$ and $a=1/4$.

\section{Searching for computational Markovianity}
\label {DMark}
We discuss here how one would search for a frame in which the evolution is computationally Markovian, as per Definition~\ref{defdm} of Sec.\ref{sec:dm}. We distinguish between two types of definitions $\mathcal{M}[H(t),\rho_B]$ of Markovianity for the case of initially uncorrelated states: those which require explicit knowledge of $H(t)$ and $\rho_B$, and those that only require knowledge of the dynamical map $\phi_t (\cdot)$ induced by the natural dynamics and, possibly, by appropriate interventions. For the first type of definition, verifying computational Markovianity in the non-factorisable case is then a matter of finding an appropriate frame, with $\rho_{SB} = \sum w_\alpha Q_\alpha \otimes \rho_\alpha$, such that $\{H(t), \rho_\alpha\}$ gives rise to a corresponding Markovian evolution,  i.e., such that $\mathcal{M}[H(t),\rho_\alpha]$ is satisfied if $w_\alpha>0$. The second class, on which the following discusion will concentrate, is perhaps more interesting as it includes the definitions that can be in principle experimentally verified. We now describe how one would proceed in this case. 

Let us first note that, in the factorisable case, verifying that $H(t)$ and $\rho_B$ satisfy some Markovian definition or criterion $\mathcal{M}[H(t),\rho_B]$ requires \blk (i) evolving $\rho_S\otimes \rho_B$ with  $H(t)$   (plus possibly an appropriate set of interventions on the system (or even on the bath if experimentally accessible), for various times $t$ and system states $\rho_S$, and (ii) using some sort of classical processing to verify that the set of dynamical maps associated to such evolutions, say $\phi_{t;i}$, satisfy a mathematical condition specified by $\mathcal{M}[H(t),\rho_B]$. For example, for the case of Markovianity defined via the divisibility of the evolution between two times $t>t'>0$~\cite{RHP2010}, one has to (i) generate $\phi_t$ and $\phi_{t'}$ in Eq.~(\ref{CPTPfact})  (e.g., via evolving $\rho_S\otimes\rho_B$ for some basis set of system states $\{\rho_S\}$), and (ii) verify that  they satisfy the condition $\phi_t = \Phi_{t,t'} \circ \phi_{t'}$ for some CPTP map \blk $\Phi_{t,t'}$.   If, on the other hand, we are interested in characterizing Markovianity in terms of the failure of dynamical decoupling (DD), one has to (i) calculate or characterise the dynamical maps $\phi^{(DD)}_t (\cdot)$  resulting from evolving under $H_{SB}(t) + H^{(DD)}_S (t)$, where $H^{(DD)}_S (t)$ is the control Hamiltonian implementing a dynamical decoupling sequence, and (ii) verify that  $\phi^{(DD)}_t (\cdot) = \phi_t (\cdot)$ for all DD sequences. 
For the purposes of the upcoming discussion, let us label the set of routines that generate the required evolutions by $\mathcal{EM}$, and the classical processing algorithm that verifies ${\cal M}[H(t), \rho_B]$  by $\mathcal{CM}$. 

Let us now consider the dynamics of an initially non-factorisable state. If a system is computationally Markovian according to Definition~\ref{defdm}, relative to some ${\cal M}[H(t), \rho]$, \blk then there must exist a B+ decomposition of the form $\rho_{SB} = \sum_{a} \tilde{w}_a \blk  \tilde{Q}_a \otimes \tilde{\rho}_a$,  corresponding to some basis of positive operators $\{\tilde P_\alpha\}$, such that each pair $\{H(t), \tilde{\rho}_\alpha \}$ satisfies ${\cal M}[H(t), \tilde\rho_\alpha]$ for  $\tilde w_\alpha\neq 0$. Notice that the $\tilde{Q}_\alpha$  need not result from a canonical decomposition and thus there can be more than $d^2$ of them. Now, imagine that the experimenter has a preferred \blk $B+$ composition  corresponding to some \blk basis $\{ P_\alpha\}$. The associated  B+ decomposition, 
\begin{align*}
\rho_{SB} = \sum_{\alpha} w_\alpha Q_\alpha \otimes {\rho}_\alpha,
\end{align*}
induces the (experimentally accessible, as discussed earlier) bath states $\rho_\alpha$ and \blk CP maps $\phi^{(\alpha)}_t$. Thus, for each routine in $\mathcal{EM}$, which would give us the evolution of $\bullet\otimes\rho_B$  and associated set of CPTP maps $\{\phi_{t;i}\}$ \blk in the factorisable case, \blk we can now obtain the evolution of $\bullet\otimes\rho_\alpha$ for each $\alpha$  and the associated maps $\{\phi^{(\alpha)}_{t;i}\}$. \blk  So, in principle, one can verify if the preferred decomposition is in fact a decomposition in which the dynamics is computationally Markovian, by applying $\mathcal{CM}$ to each ${\cal M}[H(t), \rho_\alpha]$. 

One can take this a step further and search for the B+ decomposition in which the dynamics is computationally Markovian, by additional classical processing. Imagine one has already run $\mathcal{EM}$ with the preferred basis and has thus access to the evolution of $\bullet\otimes\rho_\alpha$ \blk for every $\alpha$, and thus all the maps $\{\phi^{(\alpha)}_{t;i}\}$ sufficient to verify$ \mathcal{M}[H(t), \rho_\alpha]$ for all $\alpha$.  \blk It follows that \blk the two decompositions, one induced by the decomposition in which the evolution is computationally Markovian and the other by the preferred decomposition, are related via $ \tilde{w}_a \tilde \rho_a = \sum_{\alpha} \tr{\tilde{P}_{a} Q_\alpha} w_\alpha \rho_\alpha$ and thus the corresponding sets of maps satisfy 
$$
\tilde{w}_a \tilde{\phi}^{(a)}_{t;i} = \sum_{\alpha} \tr{\tilde{P}_{a} Q_\alpha} w_\alpha \phi^{(\alpha)}_{t;i}.
$$

 This suggests then that one can search for a set of positive operators $\{{P'}_{a}\}$, or equivalently a set of coefficients $\kappa_{a,\alpha}=\tr{{P'}_{a} Q_\alpha}$, such that the maps $${w'}_a {\phi'}^{(a)}_{t;i} = \sum_{\alpha} \kappa_{a,\alpha} w_\alpha \phi^{(\alpha)}_{t;i}$$ are such that: (i) $\{{\phi'}^{(a)}_{t;i}\}$ satisfy the mathematical constraints associated to $\mathcal{M}[U_t, \rho'_a]$, where $\rho'_a = {\rm Tr_S} [ (P_a \otimes I_B)  \rho_{SB}]$, for all $a$, and (ii) that they lead to consistent evolutions, i.e., $\sum_{a} {w'}_a {\phi'}^{(a)}_{t;i} (\rho_S)= \sum_{a} {w}_a {\phi}^{(a)}_{t;i} (\rho_S)$. If one can find such set  $\{{P'}_{a}\}$ or the set of coefficients $\kappa_{a,\alpha}$, or even show that it exists, one says that the evolution is computationally Markovian (in particular with respect to the B+ decomposition associated to $\{{P'}_{a}\} = \{\tilde{P}_{a}\}$).  Thus, computational Markovianity can in principle be experimentally verified.

We stress that the search for the $\{{P'}_{a}\}$ is purely numerical, and that one only has to run $\mathcal{EM}$ once in order to get the $w_\alpha \phi^{(\alpha)}_{t;i}$, i.e., the additional cost of verifying computational Markovianity is classical processing. We expect the search to be a hard problem in general but, as we point out in the main text, it can also be trivial, depending on $\mathcal{M}[H(t),\rho_B]$.

\blk

\end{document}